\newtheorem{theorem}{Theorem}
\newtheorem{definition}{Definition}
\newtheorem{proposition}{Proposition}
\begin{document}

\title{The typical set and entropy in stochastic systems with arbitrary phase space growth}
\author{Rudolf Hanel$^{1,2}$ and Bernat Corominas-Murtra$^3$ }

\affiliation{
$^1$ Complexity Science Hub Vienna, Josefst\"adter Strasse 39, 1080 Vienna, Austria\\
$^2$ Section for Science of Complex Systems, Medical University of Vienna, Spitalgasse 23, 1090 Vienna, Austria\\
$^3$ Institute of Biology,  University Graz, Holteigasse 6, A-8010 Graz, Austria 
}

\thanks{author for correspondence: bernat.corominas-murtra@uni-graz.at}

\begin{abstract}
The existence of the {\em typical set} is key for data compression strategies and for the emergence of robust statistical observables in macroscopic physical systems. Standard approaches derive its existence from a restricted set of dynamical constraints. However, given the enormous consequences for the understanding of the system's dynamics, and its role underlying the presence of stable, almost deterministic statistical patterns, a question arises whether typical sets exist in much more general scenarios. We demonstrate here that the typical set can be defined and characterized  from general forms of entropy for a much wider class of stochastic processes than it was previously thought. This includes processes showing arbitrary path dependence, long range correlations or dynamic sampling spaces; suggesting that typicality is a generic property of stochastic processes, regardless of their complexity. Our results impact directly in the understanding of the stability of complex systems, open the door to new data compression strategies and points to the existence of statistical mechanics-like approaches to systems arbitrarily away from equilibrium with dynamic phase spaces. We argue that the potential emergence of robust properties in complex stochastic systems provided by the existence of typical sets has special relevance to biological systems.
\end{abstract}
\keywords{Entropy, non-exponential phase space growth, Typical set, Asymptotic Equipartition Property, Extensivity}

\maketitle


Many natural systems are characterized by a high degree of internal stochasticity and for displaying processes leading to forms of organization of growing complexity \cite{Morowitz:1968, Maynard-Smith:1995,Bonner:1988, Wolpert:2007,Bialek:2012,Sole:2000, Tria:2014,Loreto:2016,Corominas-Murtra:2018,Iacopini:2020}. 
Biological systems, at many scales, are paradigmatic examples of that, 
triggering the debate whether the existence of {\em open-ended} evolution is a defining trait of them \cite{Schuster:1996, Bedau:2000, Kepa:2004, Kepa:2008, Day:2012, Packard:2019, Pattee:2019}, with the resulting challenge for a potential statistical-physics like characterization. In early embryo morphogenesis, for example, not only the number of cells increases exponentially in time, resulting into the corresponding increase of potential configurations, but also cells differentiate into specialized cell types \cite{Wolpert:2007}, implying, in statistical physics language, that new states enter the system. This process is almost completely irreversible and, although highly precise, is known to have a strong stochastic component \cite{Dietrich:2007,Maitre:2016,Giammona:2021}. On the other side, one can consider processes with collapsing phase spaces: Away from biology, recent advances in decay dynamics in nuclear physics succeeded considering a mathematical framework consisting on the stochastic collapse of the phase space \cite{Corominas-Murtra:2015, Corominas-Murtra:2017, Fujii:2021}. In figure (\ref{fig:CoinToss}) we schematically show the processes we are exploring. In spite of the ubiquity of such phenomena, a comprehensive characterization of systems with dynamic phase spaces in terms equivalent to the ensemble theory of statistical mechanics is lacking. 

Ensemble formalism in statistical mechanics is grounded on the concept of {\em typicality}  \cite{Cover:2012,Ash:2012,Pathria:2002, Pitowsky:2012, Lebowitz:1993}.  Informally speaking, given the set of all potential sequences of events resulting from a stochastic process, a subset, the {\em typical set}, carries most of the probability \cite{Ash:2012,Cover:2012}. This should not be confused with the set of most probable sequences: in the case of the biased coin, for example, the most probable sequence is not in the typical set. Instead, what it implies is that, for long enough sequences, the probability that the observed sequence or state belongs to the subset of sequences forming the typical set goes to $1$. Accordingly, a typical property for a stochastic system is robust and acts as a strong, almost deterministic attractor as long as the process unfolds \cite{Pitowsky:2012}, and one expects to observe it in the vast majority of cases. Moreover, if such a typical property exists, one can use this single property to --at least partially-- characterize the system, hence avoiding to go to the detailed, often unaffordable, microscopic description of all system's components. 
Arguably, considerations based on typicality drive the connection between microscopic dynamics and macroscopic observables \cite{Lebowitz:1993, Battermann:2001,Frigg:2009}, and underlie the existence of the thermodynamic limit and, hence, the consistence between micro-canonical and canonical ensembles. 
In the context of information theory, the existence of the typical set for the outcomes of a given an information source has deep consequences in the process of data compression \cite{Cover:2012,Ash:2012}.
\begin{figure}[ht!]
\includegraphics[width=8.3cm]{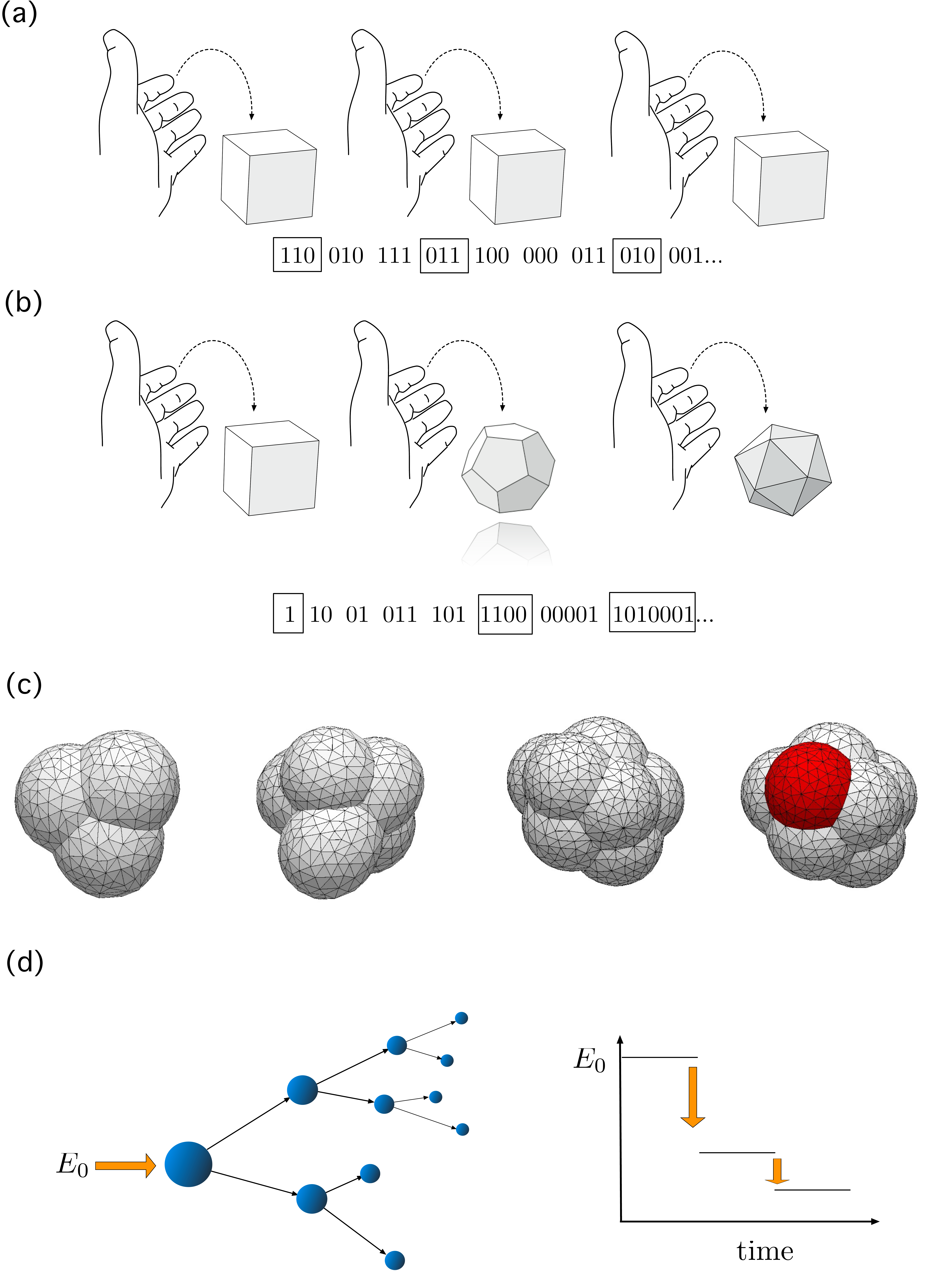}			%
 \caption{(a) Independent drawings of the same dice, either fair or biased, define a i.i.d. stochastic processes whose typical set is well defined and its growing  is approximately exponential \cite{Cover:2012}. (b) An example of a system whose typical set may show a super-exponential growth: At every drawing we update the dice by adding, e.g., a new face. (c) Potential configurations of early embryo development resembles, intuitively, the picture of the dice with growing faces. In this biological setting, new cells appear and, with that, new configurations but, on top of that, cells differentiate into new types --shown here in red-- adding new states in the system that were not there before. Interestingly, even highly reproducible, the whole process displays a strong stochastic component \cite{Dietrich:2007}. (d) Nuclear disintegration can be studied from the framework of collapsing phase spaces \cite{Fujii:2021}. In these processes, the amount of potential configurations of the system shrinks as long as the process unfolds. Toy models of embryo packings in (c) have been drawn using the \texttt{evolver} software package.}
\label{fig:CoinToss}	
\end{figure}

The size of the typical set gives us valuable information in relation to the particular way the stochastic process is filling the phase space. In equilibrium systems or information sources made of  independent drawings of identically distributed (i.i.d.) random variables, the Gibbs-Shannon entropic functional arises naturally in the characterization of the typical set \cite{Cover:2012, Ash:2012}, as the prefactor in the exponential describing the growth of its volume, establishing a clear connection between thermodynamics and phase space occupation. In systems/processes with collapsing or exploding phase spaces, path dependence or strong internal correlations \cite{Kac:1989, Pitman:2006, Clifford:2008, Tria:2014,Corominas-Murtra:2015,Loreto:2016,Corominas-Murtra:2018, Biro:2017, Jensen:2018, Iacopini:2020, Korbel:2021}, the phase space may grow super- or sub-exponentially, and the emergence of the Shannon-Gibbs entropic functional derived from phase space volume occupancy considerations is no longer guaranteed. The same situation may arise in cases dealing with non-stationary information sources  \cite{Gray:1974,Visweswariah:2000,Vu:2009,Boashash:2013,Granero:2019}. Generalized forms for entropies have been proposed to encompass these more general scenarios \cite{Abe:2000, Hanel:2011a, Enciso:2017, Tempesta:2016, Tempesta:2016a,Thurner:2017, Jizba:2020, Korbel:2019, Jizba:2017}, some of them explicitly linking the entropic functional to the expected evolution of the phase space volumes \cite{Hanel:2011, Tempesta:2016, Jensen:2018, Jensen:2018a,  Korbel:2018, Korbel:2020, Hanel:2013}.
In spite notable advances have been reported even for systems with physical significance \cite{Nicholson:2016,Balogh:2020, Korbel:2021}, the concept of typicality has not been yet explored for systems/processes with exploding or shrinking phase spaces, displaying path dependent dynamics or subject to internal correlations. 

The purpose of this paper is to fill this important gap in the theory of stochastic processes, providing results with potential implications in the theory of non-equilibrium systems and in data compression and coding strategies. As we shall see, the typical set can be defined for processes arbitrarily away from the i.i.d. frame, only assuming a very generic convergence criteria, satisfied by a broad class of stochastic processes, that here we refer to as {\em compact stochastic processes}.

\section{Results}
\subsection{Compact stochastic processes}
Let us consider a general class of  stochastic processes $\eta$ \cite{Gardiner:1983, Feller:1991}. We call this class {\em categorial processes} and they encompass almost any discrete stochastic process that can be conceived. A realization of $t$ steps of the process is denoted as $\eta(t)$:
\[
\eta(t)=\eta_1\eta_2. . .\eta_{t-1}\eta_t\quad,
\]
where $\eta_1, \eta_2. . .,\eta_{t-1},\eta_t$ are random variables themselves. Note that, in different realizations of $t$ steps of the process, the sequence of random variables can be different, as the process may display path dependence, long term correlations, or changes of the phase space, either shrinking or expanding. We denote a particular trajectory/path the process may follow as:  
\[
x(t)\equiv x_1x_2 . . .x_{t-1}x_t\in \Omega(t)\quad,
\]
$\Omega(t)$ being the set of all possible paths of the process $\eta$ up to time $t$. We focus on the family of stochastic processes where there exists i) a positive, strictly concave and strictly increasing function $\Lambda\in\mathcal{C}^2$ in the interval $[1,\infty)$, such that $\Lambda(1)=0$, and ii) a positive, strictly increasing, $g\in\mathcal{C}^2$, in the interval $(1,\infty)$, by which:
\begin{eqnarray}
\lim_{t\to \infty}\frac{1}{g(t)}\Lambda\left(\frac{1}{p(\eta(t))}\right)=1\quad,
\label{eq:toinftyeta}
\end{eqnarray}
where the convergence is in probability \cite{Feller:1991}. We will call this family of stochastic processes {\em compact stochastic processes} (CSP). 
Given a CSP process $\eta$, a pair of functions $\Lambda,g$ by which equation (\ref{eq:toinftyeta}) is satisfied define a {\em compact scale} of the CSP process $\eta$. Note that these two functions may not be unique for a given process, meaning that the process can have several compact scales. 

It is straightforward to check that, if $\eta$ is a sequence of i.i.d. random variables $X_1, . . .,X_t\sim X$, $\Lambda=\log$ and $g(t)$ is $t$ times the Shannon entropy of a single realization, $H(X)$, the above condition holds, as it recovers the standard formulation of the Asymptotic Equipartition Property (AEP) \cite{Ash:2012, Cover:2012}. Therefore, the drawing of i.i.d. random variables $\sim X$ is a CSP with compact scale $(\log, H(X)t)$. However, the range of potential processes is, in principle, much broader. In consequence, the first question we ask concerns the constraints that the convergence condition (\ref{eq:toinftyeta}) imposes on $\Lambda$. Assuming that (\ref{eq:toinftyeta}) holds, one finds that $\Lambda$'s satisfying the following condition are candidates to characterize CSP's --see proposition \ref{Sprop:Lambdaz} of the Supplementary Information (SI) for details:
\begin{equation}
\lim_{z\to \infty}\frac{\Lambda(\lambda z)}{\Lambda(z)}=1\quad, \forall \lambda\in\mathbb{R}^+\quad.
\label{eq:lambdaz}
\end{equation}
Typical candidates for $\Lambda$ are of the form $\Lambda(z)=c\log^d(z)$, where $c,d$ are two positive, real valued constants or, more generally:
\[
\Lambda(z)=c_1\log^{d_1}(1+c_2\log^{d_2}(1+ c_3\log^{d_3}(. . . )))\quad,
\]
where $c_1, . . .$ and $d_1 , . . .$ are positive, real valued constants. In previous approaches, these constants have been identified as scaling exponents that enabled us to classify the different potential growing dynamics of the phase space \cite{Korbel:2018}. 

We observe that for CSP's, equation (\ref{eq:toinftyeta}) directly implies that there are two non-increasing sequences of positive numbers $\epsilon_1, . . .\epsilon_t, . . .$, $\delta_1, . . .\delta_t, . . .$, with $\lim_{t\to\infty}\epsilon_t=\lim_{t\to\infty}\delta_t=0$, from which there is a subset of paths $A[\epsilon_t]\subseteq\Omega(t)$ by which, for all $x(t)\in A[\epsilon_t]$:
\begin{equation}
\Lambda^{-1}((1+\epsilon_t)g(t))\leq p(x(t))\leq  \Lambda^{-1}((1-\epsilon_t)g(t))\quad,
\label{eq:TypicalProb}
\end{equation}
and:
\begin{equation}
\mathbb{P}(A[\epsilon_t])> 1-\delta_t\quad.
\label{eq:PA>1}
\end{equation}
where:
\[
\mathbb{P}(A[\epsilon_t])=\sum_{x(t)\in A[\epsilon_t]}p(x(t))\quad.
\]
We call the sequence of subsets $A[\epsilon_1]. . .A[\epsilon_t]$ of the respective sampling spaces $\Omega(1) . . .\Omega(t)$ a {\em sequence of typical sets of $\eta$}. Informally speaking, equation (\ref{eq:PA>1}) tells us that, for large enough $t$'s, the probability of observing a path that does not belong to the typical set becomes negligible. In consequence, {\em the typical set can be identified for CSP's}: Given CSP, the typical set $A[\epsilon_t]$ absorbs, in the limit $t\to\infty$, all the probability --see Theorem \ref{Stheorem:Typical} of the appendix.  We omitted a direct reference to the process $\eta$ in the notation of the typical set  (i.e.: $A[\epsilon_t]\equiv A[\epsilon_t](\eta)$) for the sake of readability. In the sequel we will omit this reference unless it is strictly necessary.
 In the next section we provide more details on the specific bounds in size by studying a subclass of the CCP's, namely, the class of {\em simple} CCP's. For them, the characterization of the typical set can be performed from a generalized form of entropy. 

\subsection{The typical set and generalized entropies}
Equation (\ref{eq:toinftyeta}) can be related to a general form of path entropy:
\begin{equation}
S_{\Lambda}(\eta(t))=\sum_{x(t)\in\Omega(t)} p(x(t))\Lambda\left(\frac{1}{p(x(t))}\right)dx(t)\quad,
\label{eq:genentrop}
\end{equation}
It can be proven that $S_{\Lambda}$ satisfies three of the four Shannon-Khinchin axioms expected by an entropic functional \cite{Shannon:1948, Khinchin:1957, Ash:2012} in Khinchin's formulation \cite{Khinchin:1957}, to be referred as SK1, SK2, SK3. In particular SK1 states that entropy must be a function of the probabilities, which is satisfied by $S_\Lambda$, by construction. SK2 states that $S_{\Lambda}$ is maximized by the uniform distribution $q$ over $\Omega(t)$, i.e.:
\[
q(x(t))=\frac{1}{|\Omega(t)|}\quad.
\]
We further observe that $S_\Lambda$ is a monotonously increasing function as well, in the case of uniform probabilities: Let us suppose two CSP's $\eta$ and $\eta'$ that sample uniformly their respective sampling spaces, $\Omega(t),\Omega'(t)$, such that $|\Omega(t)|<|\Omega'(t)|)$. Let, in consequence, $q$ and $q'$ be the uniform distributions over $\Omega(t)$ and $\Omega'(t)$, respectively, then:
\[
S_{\Lambda}(q)=\Lambda\left(|\Omega(t)|\right)<\Lambda \left( |\Omega'(t)| \right)=S_{\Lambda}(q')\quad,
\]
where $S_{\Lambda}(q), S_{\Lambda}(q')$ are the generalized entropies as defined in equation (\ref{eq:genentrop}) applied to distributions $q$ and $q'$.
Finally, SK3 states that, if $p(x(t))=0$, then $p(x(t))$ does not contribute to the entropy, which implies:
\[
\lim_{p(x(t))\to 0}
p(x(t))\Lambda\left(\frac{1}{p(x(t))}\right)=0\quad,
\]
satisfied as well for any $\Lambda$ considered in the definition of the CSP's. In the proposition \ref{Sprop:SKs} of the SI we provide details of the above derivations. We observe that SK4 is not generally satisfied: This axiom states that $S(AB)=S(A)+S(B|A)$, and one can only guarantee its validity in the case of Shannon entropy, where $\Lambda=\log$. In the general case, this condition may not be satisfied. A different arithmetic rule can substitute SK4 to accomodate other entropic forms \cite{Tempesta:2016}. Notice, however, that the use of Shannon (path) entropy --i.e., $\Lambda=\log$-- in the compact scale of a CSP may be used in a very general case, including systems with correlations or super-exponential sample space growth, as we will see in section \ref{sec:CRP}.

If the contributions to the above entropy of the paths belonging to the complementary set of $A[\epsilon_t]$, $\Omega(t)\setminus A[\epsilon_t]$ are negligible in the limit of $t\to\infty$, then we call the CSP {\em simple}.
In the case of simple CSP's, the convergence condition (\ref{eq:toinftyeta}) can be rewritten as:
\begin{eqnarray}
\lim_{t\to \infty}\left|\frac{1}{g(t)}\Lambda\left(\frac{1}{p(\eta(t))}\right)-\frac{S_{\Lambda}(\eta(t))}{g(t)}\right|=0\quad,
\label{eq:toinftyeta_0}
\end{eqnarray}
(in probability). In consequence:
\begin{equation}
\frac{S_{\Lambda}(\eta(t))}{g(t)}\to 1\quad.
\label{eq:SLto1}
\end{equation}
In the theorem \ref{STh:Simple} of the SI we demonstrate this general result. 
Once condition (\ref{eq:toinftyeta_0}) is satisfied, the typical set can be naturally defined for CSP's in terms of the generalized entropy $S_\Lambda$. We first reword condition (\ref{eq:toinftyeta_0}) as follows: Given a simple CSP $\eta$, there are two non-increasing sequences of positive numbers $\epsilon_1, . . .\epsilon_t, . . .$, $\delta_1, . . .\delta_t, . . .$, with $\lim_{t\to\infty}\epsilon_t=\lim_{t\to\infty}\delta_t=0$, by which:
\begin{equation}
\mathbb{P}\left(\left|\frac{1}{S_\Lambda(\eta(t))}\Lambda\left(\frac{1}{p(x(t))}\right)-1\right|>\epsilon_t \right)<\delta_t\quad.
\label{eq:convergence}
\end{equation}
If condition (\ref{eq:convergence}) applies, for each $t>0$ there is a set of paths, the {\em typical set} $A[\epsilon_t]\subseteq\Omega(t)$, defined as:
\begin{equation}
A[\epsilon_t]=\left\{x(t) \in\Omega(t):\left|\frac{1}{S_\Lambda(\eta(t))}\Lambda\left(\frac{1}{p(x(t))}\right)-1\right|<\epsilon_t\right\},
\label{eq:Typical_Set}
\end{equation}
by which
$\mathbb{P}(A[\epsilon_t])>1-\delta_t$. Notice that, now, the characterization of the typical set is made using the generalized entropy $S_\Lambda$.

The next obvious question refers to the cardinality of the typical set $|A[\epsilon_t]|$. We will see that it can be bounded by above and below in a way analogous to the standard one \cite{Cover:2012}. We can provide the first bound by observing that: 
\begin{eqnarray}
1-\epsilon_t&\leq&\sum_{x(t)\in A[\epsilon_t]} p(x(t))\nonumber\\
&\leq& \frac{|A[\epsilon_t]|}{\Lambda^{-1}((1-\epsilon_t)S_\Lambda(\eta(t)))}\nonumber\quad,
\end{eqnarray}
where $\Lambda^{-1}$ is the inverse function of $\Lambda$, i.e., $(\Lambda^{-1}\circ \Lambda)(z)=z$, which exists given the assumption that $\Lambda$ is a monotonously growing function made in the definition of CSP's.
From that, it follows that the cardinality of the typical set is bounded from below as:
\begin{equation}
|A[\epsilon_t]| \geq (1-\epsilon)\Lambda^{-1}((1-\epsilon)S_\Lambda(\eta(t)))\quad. 
\label{eq:bbelow}
\end{equation}
For the upper bound,  we observe that:
\begin{eqnarray}
1&\geq &\sum_{x(t)\in A[\epsilon_t]} p(x(t))\nonumber\\
&\geq& \frac{|A[\epsilon_t]|}{\Lambda^{-1}((1+\epsilon_t)S_\Lambda(\eta(t)))}\nonumber\quad,
\end{eqnarray}
leading to:
\begin{equation}
|A[\epsilon_t]|\leq \Lambda^{-1}((1+\epsilon)S_\Lambda(\eta(t)))\quad.
\label{eq:babove}
\end{equation}
Given the bounds provided in equations (\ref{eq:bbelow}) and (\ref{eq:babove}), one can (roughly) estimate the volume of the typical set as --see proposition \ref{SI:prop:simplefillingtovolumetric} of the SI for details:
\begin{equation}
|A[\epsilon_t]|\approx \Lambda^{-1}(S_\Lambda(\eta(t)))\quad.
\label{eq:Raw_TypS}
\end{equation}
The above equation gives us the opportunity of rewriting the entropy in a {\em Boltzmann-like} form. Identifying the cardinality of the typical set as the effective number of alternatives the system can achieve, one can write:
\[
S_\Lambda(\eta(t))\sim\Lambda(|A[\epsilon_t]|)\quad.
\]
Finally, we notice that we can (roughly) approximate the typical probabilities as:
\[
p(x(t))\approx \frac{1}{\Lambda^{-1}(S_\Lambda(\eta(t)))}\quad.
\]
We thus provided a general proof that the typical set exists and that it can be properly defined for a wide class of stochastic processes, the CSP's,  those satisfying convergence condition (\ref{eq:toinftyeta}). Moreover, we show that its volume can be bounded and fairly approximated as a function of the generalized entropy emerging from the convergence condition, $S_\Lambda$, as defined in equation (\ref{eq:genentrop}). 

\subsection{Example: A path dependent process} 
\label{sec:CRP}


\begin{figure}
\centering
\includegraphics[width=8.cm]{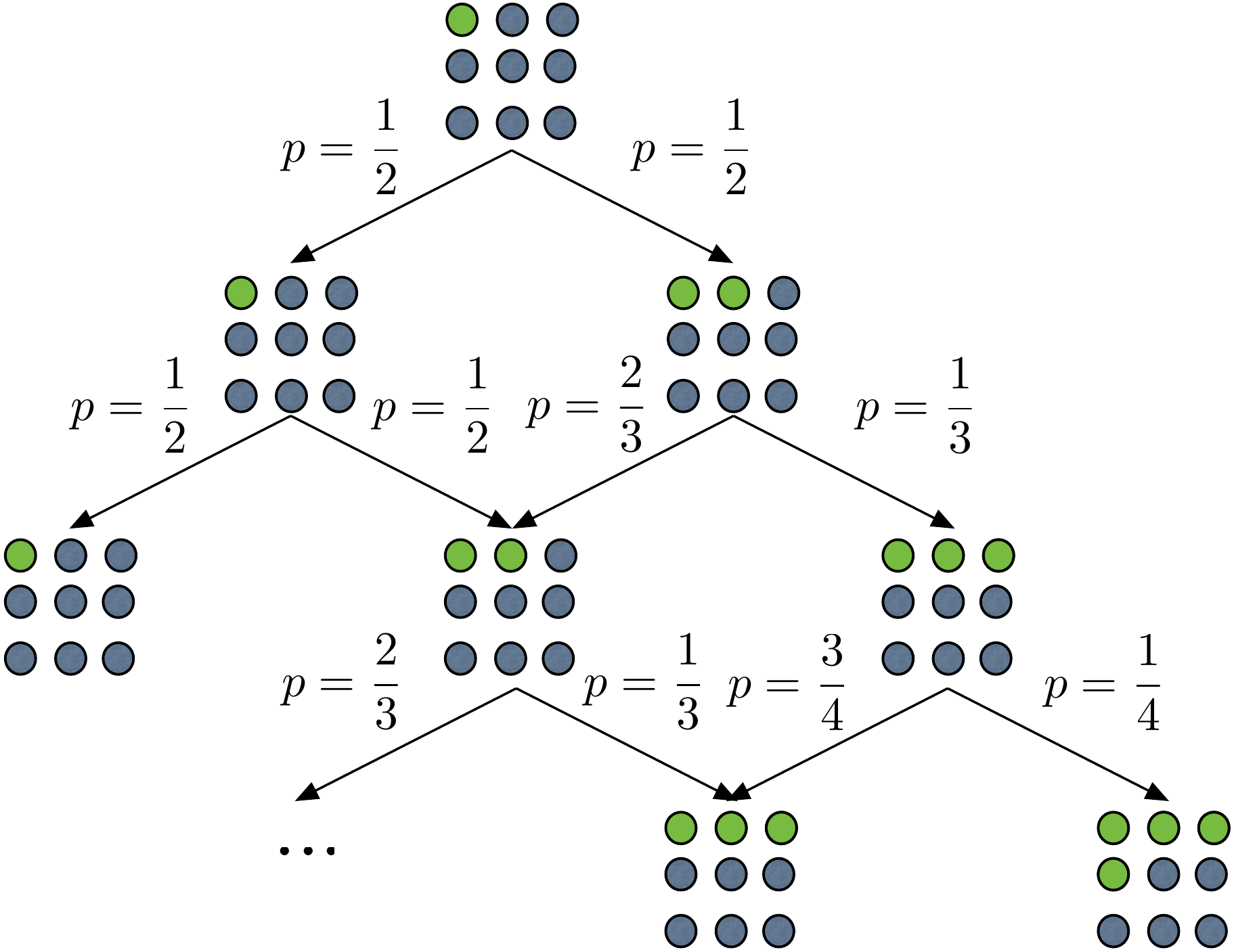}			%
 \caption{The rules of the Chinese restaurant process with memory. Here green circles represent occupied tables and grey circles empty tables --notice that in the mathematical formulation of the problem the number of tables is infinite. Arrows depict the possible transitions of the process and the associated probabilities. 
}
\label{fig:CRP}	
\end{figure}
\begin{figure*}[ht!]
\centering
\includegraphics[width=18.cm]{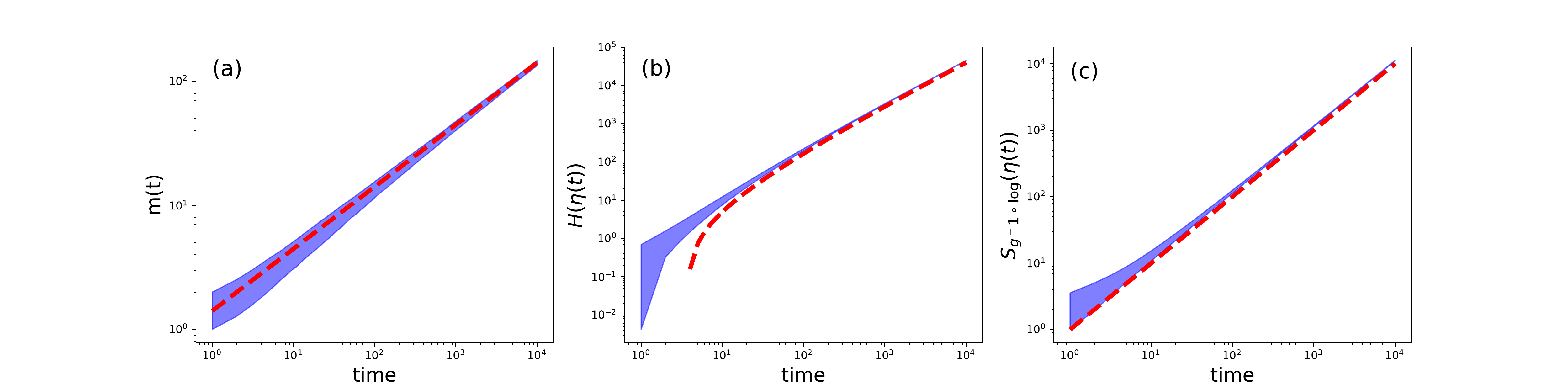}			%
 \caption{Numerical simulations for the {\em Chinese Restaurant process with memory}. The blue cloud represents actual numerical outcomes, dashed orange line the theoretical prediction. Time is given in arbitrary coordinates, representing a step in the process. In (a) we show the evolution of the amount of occupied tables against the prediction $m(t)\sim\sqrt{2t}$. (b) The evolution of Shannon path entropy for the CRPM, being the prediction given in (\ref{eq:phiCh}). The dashed red line shows the function $g(t)\sim \frac{t}{2}\log t$. (c) Evolution of the generalized path entropy $S_\Lambda$, with $\Lambda$ as defined in equation (\ref{eq:LambdaAlternative}). Numerical outcomes have been obtained from $1000$ replicas of the whole CRPM process up to $t=10^4$ steps.
}
\label{fig:Entropy_Plots}	
\end{figure*}

We briefly explore the behaviour of the typical set and its associated entropic forms through a model displaying both path dependence and unbounded growth of the phase space. The process $\eta$ works as follows: Let us suppose we have a restaurant with an infinite number of tables $m_1,. . . .,m_n, . . .$. At $t_0=0$ a customer enters the restaurant and sits at table $m_1$.
At time $t$ a new customer enters the restaurant where already $m(t)$ tables are occupied 
--occupation number of each table is unbounded. The customer can chose either sitting 
in an already occupied table from the $m_1, . . .,m_{m(t)}$ occupied tables, each with equal probability $\frac{1}{m(t) + 1}$, or in the next unoccupied one, $m_{m(t)+1}$, again with probability $\frac{1}{m(t)+1}$. This process is a version of the so-called {\em Chinese restaurant process} \cite{Pitman:2006, Bassetti:2009} with a minimal ingredient of memory/path dependence. Hence, we refer to it as the {Chinese restaurant process with memory} (CRPM). In figure (\ref{fig:CRP}) we sketch the rules of this process. Crucially, as $t\to\infty$, the random variable accounting for the number of tables $m(t)$ has the following convergent behaviour --see proposition \ref{prop:tosqrtmu} of the SI for details:
\[
\frac{m(t)}{\sqrt{2t}}\to 1\quad.
\]
In figure (\ref{fig:Entropy_Plots}a) we see that the prediction $m(t)\sim \sqrt{2t}$ is quite accurate when confronted to numerical simulations of the process. 
This property enables us to demonstrate that the CRPM we are studying is actually a CSP with compact scale $(\log, \frac{t}{2}\log t)$ --see theorem \ref{Sth:compactCrp} of the SI. In particular, equation (\ref{eq:toinftyeta}) is satisfied, in this particular case as:
\[
\lim_{t\to \infty}\frac{1}{\frac{t}{2}\log t}\log\left(\frac{1}{p(\eta(t))}\right)=1\quad,
\]
in probability. In addition, the process is {\em simple} --see theorem \ref{Sth:CRPSimple}. Since we are using $\Lambda=\log$, the entropy form that will arise is  Shannon path entropy, by direct application of equation \ref{eq:genentrop}, i.e., $S_\Lambda(\eta(t))=H(\eta(t))$, with $H(\eta(t))$ defined as: 
\begin{equation}
 H(\eta(t))=-\sum_{x(t)\in\Omega(t)}p(x(t))\log p(x(t))\quad.
\label{eq:phiCh}
\end{equation}
In consequence,
\begin{equation}
\frac{H(\eta(t))}{\frac{t}{2}\log t}\to 1\quad.
\label{eq:HT/t2}
\end{equation}
Given the compact scale used, one can estimate the evolution of the size of the typical set as:
\begin{equation}
|A[\epsilon_t]|\sim \sqrt{\Gamma(t)}\quad,
\label{AChinese}
\end{equation}
where $\Gamma$ is the standard $\Gamma$-function \cite{Abramowitz:1964}. We see that the growth of the typical set as shown in equation (\ref{AChinese}) is clearly faster than exponential. In addition,  in figure (\ref{fig:Entropy_Plots}b) we see that the prediction made in equation (\ref{eq:HT/t2}) fits perfectly with the numerical realizations of the process. Note that we have shown the dependence on Shannon path entropy for the clarity in the exposition. Indeed, as pointed out above, a CCP $\eta$ may have several compact scales. For example, taking the compact scale that led to Shannon entropy, $(\log, g(t))$, with $g(t)=\frac{t}{2}\log t$, one can construct another compact scale for the CRPM by composing $g^{-1}$ --which, by assumption, exists-- to both functions. In consequence, one will have a new compact scale $(\Lambda, \tilde{g})$, defined as:
\begin{equation}
\Lambda(t)=(g^{-1}\circ \log)(t)\sim \frac{2\log(t)}{\mathbf{W}(2\log(t))}\,,\quad\tilde{g}(t)=t\quad,
\label{eq:LambdaAlternative}
\end{equation}
being  $\mathbf{W}$ the {\em Lambert} function \cite{Abramowitz:1964}, where only the positive, real branch is taken into account. In figure  (\ref{fig:Entropy_Plots}c) we see that $S_\Lambda(\eta(t))$ fits perfectly $g(t)\sim t$, proving that $(\Lambda, t)$ is a compact scale for the CRPM --see also section 3\ref{secSI:Lg} of the SI. We observe that this particular compact scale makes the path entropy $S_\Lambda$ {\em extensive} when applied to the  CRPM.

\section{Discussion}  


We demonstrated that, for a very general class of stochastic processes, to which we refer to as {\em compact stochastic processes}, the typical set is well defined. These processes can be path dependent, contain arbitrary internal correlations or display dynamic behaviour of the phase space, showing sub- or super- exponential growth on the effective number of configurations the system can achieve. The only requirement is that there exist two functions $\Lambda,g$ for which equation (\ref{eq:toinftyeta}) holds. Along the existence of the typical set, a generalized form of entropy naturally arises, from which, in turn, the cardinality of the typical set can be computed. 

The existence of the typical set in systems with arbitrary phase space growth opens the door to a proper characterization, in terms of statistical mechanics, of a number of processes, mainly biological, where the number of configurations and states changes over time. In particular, it paves the path towards the statistical-mechanics-like understanding of processes showing open-ended evolution. For example, this could encompass thermodynamic characterizations of --part of-- developmental paths in early stages of embryogenesis. The existence of the typical set, even in some extreme scenarios of stochasticity and phase space behaviour, may not be uniquely instrumental as a theoretical tool: As a speculative hypothesis, one may consider that typicality lays behind the astonishing reproducibility and precision of some biological processes. In this scenario, stochasticity would drive the system to the set of {\em correct} configurations --those belonging to the typical set-- with high accuracy. Selection, in turn, would operate on typical sets, thereby promoting certain stochastic processes over the others. More specific scenarios are nevertheless required in order to make this suggesting hypothesis more sound.

Further works should clarify the potential of the proposed probabilistic framework to accommodate generalized, consistent forms of thermodynamics and explore the complications that can arise due to the break of ergodicity that is implicit in some of the processes compatible with the above description. Importantly, our results provide a potential starting point for an ensemble formalism for systems with arbitrary phase space growth, extending the concept of thermodynamic limit to these systems without requiring further conditions like microscopic detailed balance. Questions like the definition of free energies or the possible need of extensivity to have a consistent picture remain, however, open. To give tentative answers to these questions, links to early proposals could be in principle drawn, both at the level of thermodynamic grounds --see, e.g., \cite{Abe:2001, Abe:2006, Jensen:2018}-- and at the level of entropy characterization, as, for example, in \cite{Jensen:2018a, Tempesta:2016, Korbel:2018, Hanel:2011a, Korbel:2021}. We finally point out the impact of our results for the study of information sources, given the important consequences the typical set has for optimal coding and data compression. The existence of the typical set in these broad class of information sources, where in general, roughly speaking, the information flow is not constant, may open the possibility of new compressing strategies. These could be based, for example, on the encoding of the specific CSP used to generate the information source and the $\Lambda, g$ functions used to ensure convergence. 

\section*{Acknowledgements}
The authors want to thank Petr Jizba and Artemy Kolchinsky for the helpful discussions that enabled us to improve the quality of the manuscript. B. C-M wants to acknowledge the helpful hints from Daniel R. Amor and the support of the field of excellence {\em Complexity in Life, Basic Research and Innovation} of the University of Graz.



\appendix

\section*{Supplementary material}
In this Supplementary material we systematically develop the mathematical theory used in the main text of the manuscript "The typical set and entropy in stochastic systems with arbitrary phase space growth". The text is structured as follows: First, we define the class  of discrete stochastic systems that we call {\em categorial}, which contain almost anything that can be conceived. From them, we select the subclass of {\em compact} processes, namely, those satisfying the convergence condition stated in equation (\ref{eq:toinftyeta}) of the main text. For them, we prove the existence of a sequence of typical sets. Further, we define another subclass, the subclass of {\em simple} processes, namely, those by which the complement of the typical set has no finite contributions to a generalized entropy in the limit $t\to \infty$. In these processes, the sequence of typical sets can be defined in terms of the generalized entropy. Finally, we present an example of a path dependent process, and we show that is {\em compact} and {\em simple}. In consequence, we can compute the typical probabilities and the size of the typical sets, which is shown to grow super-exponentially in time. In figure (\ref{fig:Hierarchy}) --below-- we outline the hierarchy that our study induces over stochastic processes.
\begin{figure*}
\includegraphics[width=14.9cm]{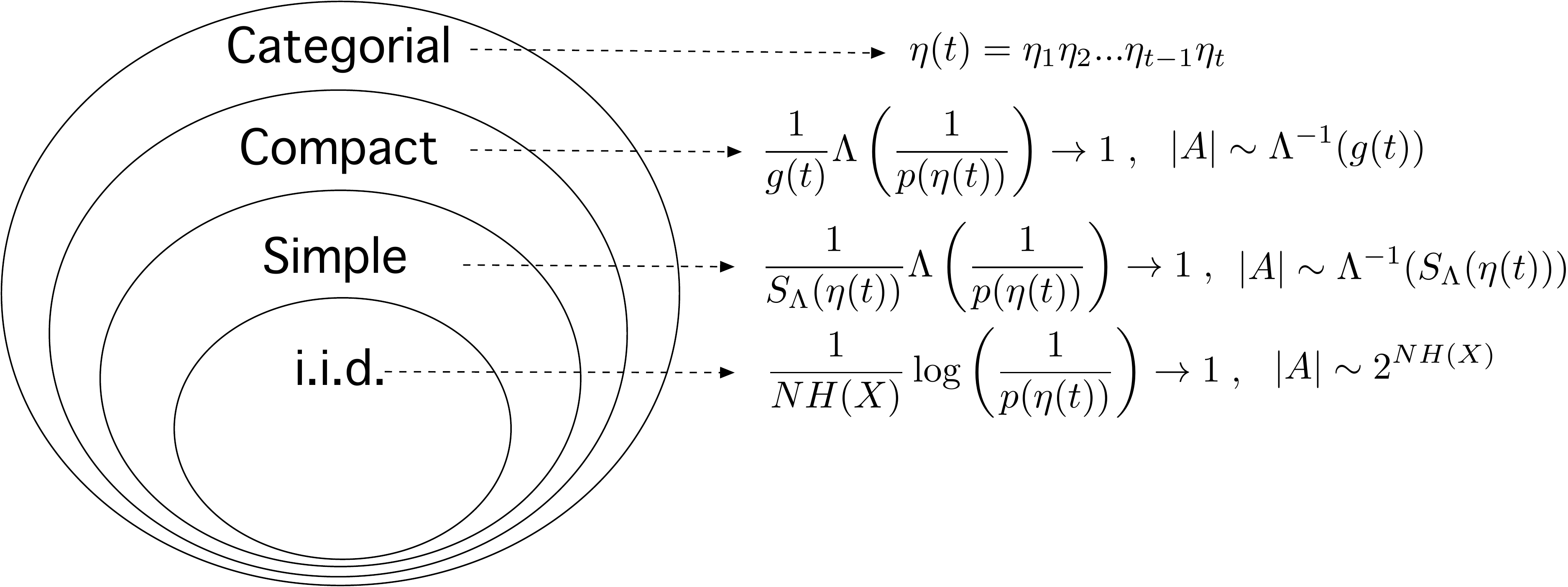}			%
 \caption{A potential hierarchy of discrete stochastic processes. The largest class would correspond to the {\em categorial} processes, which comprise almost anything that can be conceived as a discrete stochastic process. A subclass of categorial processes are the {\em compact} processes, by which the convergence condition stated in the equation (\ref{eq:toinftyeta}) of the main text holds and, therefore, a sequence of typical sets can be identified. Inside the compact processes, we identify the subclass of {\em simple} processes, where the sequence of typical sets can be defined in terms of a general form of entropy. Finally, the simplest subclass is the one defined by stochastic processes defined by sequences of independent, identically distributed random variables $\sim X$, by which the sequence of typical sets can be defined from the entropy in Shannon-like form. In turn, in i.i.d. systems, the path entropy up to time $t$ can be written as $t$ times the contribution of a single event \cite{Cover:2012}. Note that no assumptions of independence or stability of the sampling space are needed in the two first subclasses, even the typical set can be consistently identified. In addition, as we will see in section \ref{SSec:CRP}, the use of Shannon entropy to characterize the sequence of typical sets is not restricted to i.i.d. systems.}
\label{fig:Hierarchy}	
\end{figure*}

\section{Compact categorial processes and typical sets}

\subsection{Categorial processes}
$\\$

Categorial processes $\eta$ are processes that at any time $t$ 
 sample one state from a finite number of distinguishable states collected in the set $\Omega_t$, called the \textit{sample space} of the process at time $t$. If we look at a discrete time line $T={1,2,3,\cdots}$ we represent 
the process $\eta$ up to time $t$ as a sequence of random variables $\eta_t$, i.e.:
\[
\eta(t)=\eta_{1}\eta_{2}\cdots\eta_{t-1}\eta_t\, .
\]
The processes $\eta$ neither needs to consist of statistically independent random variables $\eta_t$ nor does the sample space of the process need to be constant. That is, the local sample spaces $\Omega_t$ of the variable $\eta_t$ can differ from the sample space $\Omega_{t'}$ of another variable $\eta_{t'}$. If we sample $\eta(t)$ this provides us with a particular path:
\[
x(t)=x_{1}x_{2}\cdots x_{t-1} x_t\, ,
\]
with $x(t)\in\Omega(t)$, being $\Omega(t)$ defined as:
\begin{equation}
\Omega(t)=\Omega_{1}\times\Omega_{2}\times \cdots \times\Omega_{t-1}\times\Omega_t\,,
\label{eq:dynamical_sample_spaces}
\end{equation}
in words, the Cartesian product of all local sample spaces $\Omega_{t'}$ up to time $t$.

In principle it could be that also the sample space $\Omega_t$ depends on the path $x(t-1)$ the process has taken
up to time $t-1$. In consequence,  $\Omega(t)$ could contain many paths that are not possible for the processes $\eta$ and, therefore, have zero probability  of being sampled. However, we consider processes by there exists a non-empty subset $\mathring\Omega(t) \subseteq \Omega(t)$, which contains all and only the sequences $x(t)$ of $\eta$ with $p(x(t))>0$.
\begin{definition}\label{def:wellformedinterior}
Let $\eta$ be a categorial process. We call and $p(x(t))$ the probability of paths $x(t)\in\Omega(t)$ to 
be sampled by $\eta$ and, then, we call:
\begin{equation}
\mathring\Omega(t)=\{x(t)\in\Omega(t)\,:\,p(x(t))>0\}\,,
\label{eq:interior_dynamical_sample_spaces}
\end{equation}
the \textbf{well formed interior} of $\Omega(t)$ or the 
\textbf{path sample space} of the process.
\end{definition}
Let us call $\Omega_t[x(t-1)]$ the set of states of $\Omega_t$ that can be sampled at time $t$ {\em provided that} our trajectory up to time $t-1$ was $x(t-1)$. The set of potential states that can be visited at time $t$, will then be: 
\[
\Omega_t=\bigcup_{x(t-1)\in\mathring\Omega(t-1)}\Omega_t[x(t-1)]\,,
\]
 i.e. $\Omega_t$ contains all states the process could possibly sample at time $t$ after all possible histories the process could have sampled up to time $t-1$. In this way we can always assume that we can find $\Omega(t)$ and its well formed interior $\mathring\Omega(t)$ by pruning all ill formed sequences from $\Omega_t\times \mathring\Omega(t-1)$. 
All information on how the sample space $\Omega_t$ gets sampled then solely resides in the hierarchy of transition probabilities $p(x_t|x(t-1))$, where $x_t\in\Omega_t$ and 
$x(t-1)\in \Omega(t-1)$. We get
\[
\mathring\Omega(t)=\{x(t)\in\Omega_t\times \mathring\Omega(t-1)\,:\,p(x_t|x(t-1))>0\}\,.
\]
In the context of categorial processes,  $p(x(t))$ is clearly a monotonic decreasing function in time bounded below by zero, $p(x(t))\geq 0$. In consequence, $\lim_{t\to\infty} p(x(t))$ converges for all paths possible paths $x(t)$ of the process $\eta$. This also means that for almost all paths the probability  $\lim_{t\to\infty} p(x(t))=0$ even though some finite number of paths could have non-zero probabilities even in the limit $t\to\infty$, although convergence is guaranteed\footnote{Unlike for processes, for {\em systems} --e.g., particles in a box-- it is not guaranteed that adding a new particle --as analog of to a new sample step-- $p(x(t))\leq p(x(t-1)$. In consequence, for systems, we cannot guarantee
convergence of $p(x(t))$ as the system size $t\to\infty$}.  After this general description of categorial processes, we can start by characterizing the subclass of them we are interested in. 

\subsection{Compact categorial processes}
$\\$

In the following we provide the condition of compactness that we impose to categorial processes in order to ensure the existence of a typical set.
\begin{definition}\label{def:compact}
Let $\eta$ be a categorial process $\eta$ and let $\eta(t)$ denote the process up to time $t$ and
$\Omega(t)$ denote the path sample space of $\eta(t)$ (as discussed above). Let us consider pairs of functions $(\Lambda,g)$, such that $\Lambda$ is a twice continuously differentiable, strictly monotonic increasing and strictly concave function on the interval $[1,\infty)$ with $\lim_{t\to\infty}\Lambda(t)=\infty$ and $\Lambda(1)=0$; and $g$ is a twice continuously differentiable strictly monotonically increasing function on the interval $[0,\infty)$. If we can \textbf{associate} such a pair of functions $(\Lambda,g)$ with the process $\eta$, such that:
\[
\lim_{t\to\infty}\frac{1}{g(t)}\Lambda\left(\frac{1}{p(\eta(t))}\right)\ =\ 1\,,
\]
(in probability), then we call the process $\eta$ \textbf{compact} in $(\Lambda,g)$ and $(\Lambda,g)$ a
\textbf{compact scale} of $\eta$.
\end{definition}
Note that compact scales $(\Lambda,g)$ associated to a given \textit{compact categorial processes} (CCP) need not be unique\footnote{The fact that we can find different pairs of functions, $(\Lambda,g)$, in which a process $\eta$ is compact, leads to questions related to equivalence relations on the space of pairs, $(\Lambda,g)$. As it turns out, following up the  idea of typical sets, that we are to explode below, a process $\eta$ induces an equivalence relation on this space, partitioning the space into monads of equivalent compact scales, $(\Lambda,g)$, of the process. At the same time this means that there exist inequivalent compact scales, which can be thought of as different "scales of resolution" to look at a process.}.

\subsection{Typical sets in compact categorial processes}
$\\$

Now that we have defined the stage for CCPs we can define particular sequence of subsets of paths that tell us where the probability of finding paths "typically" localizes in the path sample space. 

\begin{definition}
\label{def:compact}
Let $\eta$ be compact in $(\Lambda,g)$ and let $\epsilon_1, . . .,\epsilon_t, . . .$ be a non-increasing sequence with limit $\lim_{t\to\infty}\epsilon_t=0$, then we can define the set:
\[
A[\epsilon_t]=\left\{x(t)\in\Omega(t)\,:\, \left|\frac{1}{g(t)}\Lambda\left(\frac{1}{p(\eta(t))}\right)-1 \right|< \epsilon_t\right\}\,,
\]
for every $t=1,2,3,\cdots$. If for the choice of $\epsilon_t$ it holds that 
\[
\lim_{t\to\infty}\mathbb{P}(x(t)\in A[\epsilon_t])=1\,,
\]
then we call $A[\epsilon_t]$ a \textbf{typical set} in $(\Lambda,g)$ at time $t$ and the sequence $\epsilon_1, \epsilon_2, . . .,\epsilon_{t-1},\epsilon_t$ 
a \textbf{typical localizer} of $\eta$.
\end{definition}
At this point we have introduced the notion of typicality. Next we proof the following:
\begin{theorem}
\label{Stheorem:Typical}
If $\eta$ is a CCP in $(\Lambda,g)$, then there exist (a) typical localizer sequence $\epsilon_1, \epsilon_2, . . .,\epsilon_{t-1},\epsilon_t, . . .$  associated to $\eta$ and hence (b) the respective sequence of typical sets $A[\epsilon_1], . . .,A[\epsilon_t], . . .$ by which:
\[
\lim_{t\to \infty}\mathbb{P}(x(t)\in A[\epsilon_t])=1\,.
\]
\end{theorem}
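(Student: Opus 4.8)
The plan is to read the conclusion straight off the defining property of a CCP, the only real work being the upgrade of ``convergence in probability'' (a family of bounds, one for each fixed tolerance) to a \emph{single}, slowly shrinking, non-increasing sequence $\epsilon_t$. Write $Z_t \equiv \frac{1}{g(t)}\Lambda\!\left(\frac{1}{p(\eta(t))}\right)$; this is well defined as a random variable because $0<p(\eta(t))\le 1$ forces $1/p(\eta(t))\in[1,\infty)$, which is the domain of $\Lambda$, and $g(t)>0$. By Definition \ref{def:compact}, $Z_t\to 1$ in probability, i.e.\ for every fixed $\alpha>0$ one has $\lim_{t\to\infty}\mathbb{P}(|Z_t-1|\ge\alpha)=0$.

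First I would build an increasing sequence of ``threshold times''. For each integer $k\ge 1$, apply convergence in probability with tolerance $\alpha=1/k$ to obtain a time $\tau_k$ with $\mathbb{P}(|Z_t-1|\ge 1/k)<1/k$ for all $t\ge\tau_k$; since enlarging $\tau_k$ preserves this property, I may assume without loss of generality that $1\le\tau_1<\tau_2<\tau_3<\cdots$. Now define the localizer by $\epsilon_t=1$ for $1\le t<\tau_1$ and $\epsilon_t=1/k$ for $\tau_k\le t<\tau_{k+1}$. By construction $\epsilon_t$ is non-increasing, and since $\tau_k\to\infty$ every $t$ lies in a unique block $\tau_{k(t)}\le t<\tau_{k(t)+1}$ with $k(t)\to\infty$, so $\epsilon_t=1/k(t)\to 0$. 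Thus $\epsilon_1,\epsilon_2,\dots$ is a candidate typical localizer, and $A[\epsilon_t]$ is the associated set of Definition \ref{def:compact}.

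It then remains to verify $\mathbb{P}(x(t)\in A[\epsilon_t])\to 1$. For $t\ge\tau_1$, with $k=k(t)$, the complement of $A[\epsilon_t]$ in $\Omega(t)$ is exactly $\{x(t):|Z_t-1|\ge\epsilon_t\}=\{x(t):|Z_t-1|\ge 1/k\}$, whose probability is $<1/k=1/k(t)$ because $t\ge\tau_k$. Hence $\mathbb{P}(x(t)\in A[\epsilon_t])>1-1/k(t)$, and letting $t\to\infty$ (so $k(t)\to\infty$) gives $\lim_{t\to\infty}\mathbb{P}(x(t)\in A[\epsilon_t])=1$, establishing (a) and (b). If one also wants the companion sequence $\delta_t$ used in the main text, simply set $\delta_t=1/k(t)$ (and $\delta_t=1$ for $t<\tau_1$); then $\delta_t$ is non-increasing, $\delta_t\to 0$, and $\mathbb{P}(A[\epsilon_t])>1-\delta_t$ for every $t$.

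I do not expect a serious obstacle: this is the standard ``diagonal'' construction turning convergence in probability into an explicit rate along one sequence. The only points needing care are (i) arranging the $\tau_k$ to be \emph{strictly} increasing so that the piecewise-constant $\epsilon_t$ is genuinely non-increasing and unambiguously defined, and (ii) reconciling the strict inequality ``$<\epsilon_t$'' in the definition of $A[\epsilon_t]$ with the probabilistic bound — which is why I phrase the latter with ``$\ge 1/k$'' (equivalently one could invoke convergence in probability at tolerance $1/(2k)$). Note that neither the concavity of $\Lambda$ nor the existence of $\Lambda^{-1}$ is used here; those hypotheses enter only in the subsequent bounds on the cardinality $|A[\epsilon_t]|$.
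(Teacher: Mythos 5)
Your proof is correct and follows essentially the same route as the paper's: both upgrade convergence in probability to a single non-increasing localizer by choosing threshold times $\tau_k$ for a sequence of tolerances tending to zero and defining $\epsilon_t$ piecewise-constant on the resulting blocks. Your version is slightly more careful about the strict monotonicity of the $\tau_k$ and the choice $\epsilon^*_k=\delta^*_k=1/k$, but the argument is the same.
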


\begin{proof}
Since $\eta$ is a CCP we know, by assumption, that, as $t\to\infty$ $\Lambda(1/p(\eta(t)))/g(t)\to 1$ (in probability). We can rewrite this condition by stating that, for every $\epsilon, \delta>0$, there exists a $t_0$ by which, for each $t>t_0$ \cite{Feller:1991}:
\[
\mathbb{P}\left(\left|\frac{1}{g(t)}\Lambda\left(\frac{1}{p(x(t))}\right)-1\right|>\epsilon\right)<\delta\,.
\]
Let $\tau(\varepsilon,\delta)$ be the smallest such $t_0$. We can then use two arbitrary strictly monotonic decreasing functions $\epsilon^*_n$ and $\delta^*_n$ that converge to zero and construct a monotonically increasing sequence of times $t_n=\tau(\epsilon^*_n,\delta^*_n)$ such that for all $t\geq t_n$ it is true that
\[ 
\mathbb{P}\left(\left|\frac{1}{g(t)}\Lambda\left(\frac{1}{p(x(t))}\right)-1\right|>\epsilon^*_n\right)<\delta^*_n\,.
\]
From that, it is straightforward to define a typical localizer sequence $\epsilon_1, . . .,\epsilon_t, . . .$, by just taking:
\[
(\forall t: t_n\leq t<t_{n+1})\,,\,\epsilon_t=\epsilon^*_n\,.
\]
Finally, the condition:
\[
\lim_{t\to \infty}\mathbb{P}(x(t)\in A[\epsilon_t])=1\,,
\]
follows as a direct consequence of the construction of the sequence of typical sets $A[\epsilon_1], . . .,A[\epsilon_t], . . .$, thereby concluding the proof.
\end{proof}

The next step is to check what kind of functions $\Lambda$ can be expected when dealing with CCP's. To that end, we will impose a condition to the CCP, namely, that the CCP is {\em filling}. From this --very mild-- condition, we will then check which functions enable the convergence criteria to be fullfilled. First of all, we need to introduce some technical terms.
\begin{definition}\label{def:typical_ratios}
Let $\eta$ be a CCP in $(\Lambda,g)$ with a typical localizer sequence $\epsilon_1, . . . , \epsilon_t, . . .$. We define the upper and lower \textbf{typical ratios}, $r_{\pm}(t)$,  in $(\Lambda,g)$ as:
\[
r_{\pm}(t)=\frac{\Lambda^{-1}\left( (1\pm \epsilon_t) g(t) \right)}{\Lambda^{-1}\left( g(t) \right)}\,,
\]
where $\Lambda^{-1}$ is the inverse function of $\Lambda$, which exists due to the strict monotonicity of $\Lambda$.
\end{definition}
Note that, by construction, $r_+(t)\geq 1$ and $r_-(t)\leq 1$.
\begin{definition}\label{def:filling}
We call a CCP $\eta$ \textbf{filling} in $(\Lambda,g)$ if its typical ratios have the property
$\lim_{t\to\infty}r_+(t)=\infty$ and $\lim_{t\to\infty}r_-(t)=0$.
\end{definition}

We can now say something about the shape of functions $\Lambda$.
\begin{proposition}
\label{Sprop:Lambdaz}
If $\eta$ is a filling CCP in $(\Lambda,g)$, then it follows that:
\[
\lim_{z\to\infty} \frac{\Lambda(\lambda z)}{\Lambda(z)}=1\,,
\]
for all $\lambda>0$.
\end{proposition}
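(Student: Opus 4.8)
The plan is to exploit the filling hypothesis ``backwards'': $r_+(t)\to\infty$ says that, on the scale fixed by $g$, enlarging the argument of $\Lambda$ by the diverging factor $r_+(t)$ costs only the vanishing multiplicative increment $(1+\epsilon_t)$ in the value of $\Lambda$, and any fixed $\lambda$ is eventually swallowed by $r_+(t)$, so it must cost even less. First I would reduce the claim: the case $\lambda=1$ is trivial, and for $0<\lambda<1$ one substitutes $z\mapsto z/\lambda$ and takes reciprocals; so it suffices to treat $\lambda>1$, where moreover $\Lambda(\lambda z)/\Lambda(z)\ge 1$ holds automatically by monotonicity of $\Lambda$, leaving only the bound $\limsup_{z\to\infty}\Lambda(\lambda z)/\Lambda(z)\le 1$ to be proved.

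The core step is a squeeze. Set $z_t:=\Lambda^{-1}(g(t))$, well defined for $t$ large since $\Lambda$ is a strictly increasing continuous bijection of $[1,\infty)$ onto $[0,\infty)$ and $g(t)\ge0$ eventually. One first observes $g(t)\to\infty$: otherwise $g$ would be bounded, hence $(1\pm\epsilon_t)g(t)$ and $g(t)$ would share a finite limit and, by continuity of $\Lambda^{-1}$, $r_\pm(t)\to1$, contradicting filling; therefore $z_t\to\infty$. Using $g(t)=\Lambda(z_t)$, rewrite $r_+(t)=\Lambda^{-1}\!\big((1+\epsilon_t)\Lambda(z_t)\big)/z_t=:w_t/z_t$, so that $w_t/z_t=r_+(t)\to\infty$ while by construction $\Lambda(w_t)/\Lambda(z_t)=1+\epsilon_t\to1$. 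Fixing $\lambda>1$, for $t$ large we have $w_t>\lambda z_t$, and monotonicity of $\Lambda$ then gives
\[
1\ \le\ \frac{\Lambda(\lambda z_t)}{\Lambda(z_t)}\ \le\ \frac{\Lambda(w_t)}{\Lambda(z_t)}\ =\ 1+\epsilon_t\ \longrightarrow\ 1,
\]
i.e.\ $\Lambda(\lambda z_t)/\Lambda(z_t)\to1$ along the increasing family $z_t$ (the lower ratio $r_-(t)\to0$ would serve equally well, via $b_t:=\Lambda^{-1}((1-\epsilon_t)\Lambda(z_t))$ with $b_t/z_t\to0$).

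The main obstacle I anticipate is promoting this subsequential statement to a genuine limit as $z\to\infty$, since the localizer $\epsilon_t$ driving the squeeze is pinned to the integer time line. I would close the gap by a trapping argument. For large $z$, let $n$ be the integer with $z_n\le z<z_{n+1}$ (such $n$ exists because $z_t\uparrow\infty$); then $\lambda z<\lambda z_{n+1}<w_{n+1}$ for $n$ large, so by monotonicity of $\Lambda$,
\[
1\ \le\ \frac{\Lambda(\lambda z)}{\Lambda(z)}\ \le\ \frac{\Lambda(w_{n+1})}{\Lambda(z_n)}\ =\ (1+\epsilon_{n+1})\,\frac{g(n+1)}{g(n)},
\]
and as $z\to\infty$ the right-hand side tends to $1$ provided the chosen compact scale satisfies the mild regularity $g(n+1)/g(n)\to1$ --- which holds in every case of interest, e.g.\ $g(t)=\tfrac t2\log t$ for the CRPM. (Without even that assumption one can instead argue through the continuous map $u\mapsto\Lambda^{-1}\!\big((1+c)u\big)/\Lambda^{-1}(u)$, $c>0$: filling, together with the freedom to let $\epsilon_t\downarrow0$ as slowly as one wishes, forces this quotient to diverge as $u\to\infty$, and evaluating it at $u=\Lambda(z)$ returns exactly $\limsup_{z\to\infty}\Lambda(\lambda z)/\Lambda(z)\le1$.) Together with the reciprocal reduction for $\lambda<1$ this gives the claim for all $\lambda>0$; I note that only the monotonicity of $\Lambda$ and of $g$, not the concavity of $\Lambda$, actually enters this argument.
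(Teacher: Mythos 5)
Your core squeeze is exactly the paper's argument: set $z_t=\Lambda^{-1}(g(t))$, use the filling property $r_+(t)\to\infty$ to get $\lambda z_t\le r_+(t)\,z_t=w_t$ for $t$ large, and sandwich $1\le\Lambda(\lambda z_t)/\Lambda(z_t)\le\Lambda(w_t)/\Lambda(z_t)=1+\epsilon_t\to1$; your reduction of $0<\lambda<1$ to $\lambda>1$ by reciprocals is an equivalent substitute for the paper's symmetric use of $r_-$, and your closing remark that only monotonicity of $\Lambda$ (not concavity) enters is accurate for both proofs. Where you genuinely go beyond the paper is in noticing that this establishes the limit only along the discrete sequence $z_t$ (time being integer-valued), whereas the proposition asserts $\lim_{z\to\infty}$ over the reals; the paper's proof stops at the subsequential statement and declares the proposition proved. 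Your trapping argument is the honest way to bridge that gap, but be aware that it imports the hypothesis $g(n+1)/g(n)\to1$, which is not among the defining properties of a compact scale (a filling CCP with, say, doubly exponential $g$ would leave the points $z_n$ too sparse for the interpolation to work), and your parenthetical fallback does not close the gap either: the filling condition controls $\Lambda^{-1}\bigl((1+\epsilon_t)u\bigr)/\Lambda^{-1}(u)$ only at the points $u=g(t)$, not for arbitrary real $u\to\infty$. So your write-up proves everything the paper proves, by the same mechanism, and correctly flags that the full continuum statement requires either the mild extra regularity of $g$ you name or a reading of the proposition as a statement along the scale $z_t$.
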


\begin{proof}
We note that for filling CCP $\eta$ it is true that:
\[
\frac{1}{g(t)}\Lambda(r_{\pm}(t)\Lambda^{-1}(g(t)))\to 1\,,
\]
as $t\to\infty$.
We can rewrite $z_t=\Lambda^{-1}(g(t))$ and for $\lambda>1$ we find a $t_0$ such that $\lambda=r_+(t_0)$.
Therefore for all $t>t_0$ we find:
\[
1\leq \frac{\Lambda(\lambda z_t)}{\Lambda(z_t)}\leq \frac{\Lambda(r_+(t) z_t)}{\Lambda(z_t)} \to 1\,.
\]
The other case, $0<\lambda<1$, we prove analogously using $r_-$ instead of $r_+$, and the proposition follows. 
\end{proof}

To get an idea which kind of functions satisfy this condition we can look at the following example: 
\begin{proposition}
For any $c>0$ the function $\Lambda(z)=\log(z)^c$ has the property 
$\lim_{z\to\infty} \Lambda(\lambda z)/\Lambda(z)=1$ for all $\lambda>0$.
\end{proposition}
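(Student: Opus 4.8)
The plan is to reduce the limit to the elementary fact that $\log(\lambda z)/\log(z)\to 1$ as $z\to\infty$, and then invoke continuity of the power map $x\mapsto x^{c}$ to pass the power through the limit.

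First I would write out the ratio explicitly. For $z>1$ (so that $\log z>0$ and $\Lambda(z)\neq 0$) and any fixed $\lambda>0$ we have
\[
\frac{\Lambda(\lambda z)}{\Lambda(z)}=\frac{\bigl(\log(\lambda z)\bigr)^{c}}{\bigl(\log z\bigr)^{c}}=\left(\frac{\log\lambda+\log z}{\log z}\right)^{c}=\left(1+\frac{\log\lambda}{\log z}\right)^{c}.
\]
Next I would observe that $\log\lambda$ is a fixed real number while $\log z\to\infty$ as $z\to\infty$, so $\dfrac{\log\lambda}{\log z}\to 0$ and hence $1+\dfrac{\log\lambda}{\log z}\to 1$. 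Since the function $x\mapsto x^{c}$ is continuous at $x=1$ for every $c>0$, we conclude
\[
\lim_{z\to\infty}\frac{\Lambda(\lambda z)}{\Lambda(z)}=\lim_{z\to\infty}\left(1+\frac{\log\lambda}{\log z}\right)^{c}=1^{c}=1,
\]
which is the claim.

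There is essentially no obstacle here; the only points deserving a word of care are that the ratio is well defined (take $z$ large enough that $\log z>0$, which is harmless since we are taking $z\to\infty$) and that $\lambda$ is treated as a constant so that $\log\lambda$ contributes a vanishing relative correction. One could also remark that this shows the family $\Lambda(z)=c_{1}\log^{d_{1}}(z)$, and more generally the iterated-logarithm forms displayed in the main text, all satisfy condition \eqref{eq:lambdaz}, by the same argument applied inductively to each nested logarithm.
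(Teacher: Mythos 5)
Your proof is correct and follows essentially the same route as the paper's: expand $\log(\lambda z)=\log\lambda+\log z$, rewrite the ratio as $\left(1+\frac{\log\lambda}{\log z}\right)^{c}$, and let $\log z\to\infty$. Your explicit appeal to continuity of $x\mapsto x^{c}$ at $x=1$ just makes precise the final step the paper leaves implicit.
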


\begin{proof}
We can proof this by direct computation, i.e.:
\begin{eqnarray}
\frac{\Lambda(\lambda z)}{\Lambda(z)}&=&\left(\frac{\log(\lambda z)}{\log(z)}\right)^c \nonumber\\
&=&\left(\frac{\log(\lambda)+\log(z)}{\log(z)}\right)^c \nonumber\\
&=&\left(1+\frac{\log(\lambda)}{\log(z)}\right)^c\,.\nonumber
\end{eqnarray}

Since $\log(z)\to\infty$, one can easily read from the last line that the example family of functions $\Lambda$
fulfils the proposition.
\end{proof}

In general we can say that candidates for $\Lambda$ of filling CCPs are of the form $\Lambda(z)=c\log(z)^d$ for some
positive constants $c$ and $d$ or even slower growing functions of the form:
\[
\Lambda=c_1\log(1+c_2\log(1+c_3\log(\cdots)^{d_3})^{d_2})^{d_1}\,.
\]


\section{Typical sets, simplicity condition and generalized entropies}

\subsection{Generalized entropies associated to CCP's}
$\\$

We start by defining the generalized entropy associated to a CCP $\eta$ in $(\Lambda,g)$

\begin{definition}\label{def:gents}
Let $\eta$ be a CCP in $(\Lambda,g)$, then we call the measure, $S_\Lambda$ of $\eta(t)$, defined as:
\begin{equation}
S_\Lambda(\eta(t))=\sum_{x(t)\in \Omega(t)} p(x(t))\Lambda\left(\frac{1}{p(x(t))}\right)\,,
\label{Seq:entropy}
\end{equation}
a \textbf{generalized path entropy} associated with $\eta$. Note that for a set $B\subset \Omega(t)$ the generalized entropy measure, $S_\Lambda(B)$, is given by $S_\Lambda(B)=\sum_{x(t)\in B} p(x(t))\Lambda\left(1/p(x(t))\right)$.
\end{definition}
In the following proposition we see that the above defined entropy satisfies three of the four Shannon-Khinchin's axioms for an entropy measure \cite{Khinchin:1957} ({\bf SK1}, {\bf SK2}, {\bf SK3}). The fourth axiom ({\bf SK4}) is not generally satisfied.
\begin{proposition}
\label{Sprop:SKs}
The entropy functional $S_\Lambda$defined in equation (\ref{Seq:entropy}) satisfies the first three of the four Shannon-Khinchin's axioms for an entropy measure as formulated in \cite{Khinchin:1957}:

\noindent
{\bf SK1} $S_\Lambda$ is a contiunous function only depending on the probabilities $p(x(t))$.

\noindent
{\bf SK2} $S_\Lambda$ is maximized if $(\forall p(x(t)))$
$p(x(t))=\frac{1}{|\Omega(t)|}$, i.e., equiprobability.

\noindent
{\bf SK3} If $p(x(t))=0$, then: $p(x(t))\Lambda\left(\frac{1}{p(x(t))}\right)=0$, 
i.e., events with zero probability have no contribution to the entropy.
\end{proposition}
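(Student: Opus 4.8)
The plan is to reduce all three axioms to elementary properties of the single-outcome contribution
\[
\phi(p)\;=\;p\,\Lambda\!\left(\tfrac1p\right)\quad (0<p\le 1),\qquad \phi(0):=0,
\]
so that, for each fixed $t$, $S_\Lambda(\eta(t))=\sum_{x(t)\in\Omega(t)}\phi(p(x(t)))$ is a finite, permutation-symmetric sum in the path probabilities. \textbf{SK1} then splits in two: ``$S_\Lambda$ depends only on the probabilities'' is immediate from this representation, and for continuity it suffices to see that $\phi$ is continuous on $[0,1]$. On $(0,1]$ this is clear, since $z\mapsto 1/z$ is smooth there and $\Lambda\in\mathcal C^2$ on $[1,\infty)$, so $\phi\in\mathcal C^2((0,1])$; the only point needing work is the endpoint $p\to 0^+$, which I would handle together with \textbf{SK3}.

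For \textbf{SK3} I would prove $\lim_{p\to 0^+}\phi(p)=0$, equivalently $\Lambda(z)/z\to 0$ as $z\to\infty$. Writing $\Lambda(z)=\int_1^z\Lambda'(s)\,ds$ (using $\Lambda(1)=0$), strict concavity makes $\Lambda'$ strictly decreasing and positive, hence convergent to some $c\ge 0$, and a Ces\`aro argument gives $\Lambda(z)/z\to c$. If $c>0$ one would obtain $\Lambda(\lambda z)/\Lambda(z)\to\lambda$ for every $\lambda>0$, contradicting the scaling property (\ref{eq:lambdaz}) valid for the $\Lambda$'s of filling CCP's (Proposition \ref{Sprop:Lambdaz}); hence $c=0$ and $\phi(p)=\Lambda(1/p)\big/(1/p)\to 0$. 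For the explicit families $\Lambda(z)=c\log^d(z)$ this is simply the estimate $p\,c\,(-\log p)^d\to 0$. This also closes the gap in \textbf{SK1}: $\phi$ extends continuously to $[0,1]$, and $S_\Lambda$ is a finite sum of such continuous functions.

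For \textbf{SK2} the key computation is $\phi''(p)=p^{-3}\,\Lambda''(1/p)$, which is strictly negative because $\Lambda$ is strictly concave; thus $\phi$ is strictly concave and, by continuity, concave on the closed interval $[0,1]$. Consequently the map $(p_1,\dots,p_n)\mapsto\sum_i\phi(p_i)$ on the simplex $\{p_i\ge 0,\ \sum_i p_i=1\}$ is concave and permutation-symmetric, so its maximum is attained at the unique symmetric point $p_i=1/|\Omega(t)|$: averaging any point over all coordinate permutations cannot decrease the value (Jensen), and the inequality is strict unless the point is already symmetric; equivalently, a Lagrange-multiplier argument forces $\phi'(p_i)$ to be constant, and $\phi'$ is injective since $\phi''<0$. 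The monotonicity remark used in the main text is then a corollary, since at the uniform law $S_\Lambda$ equals $\Lambda(|\Omega(t)|)$, which grows with $|\Omega(t)|$ because $\Lambda$ is increasing.

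I expect the endpoint behaviour in \textbf{SK3} to be the only genuinely delicate step: a strictly increasing, strictly concave $\Lambda$ with $\Lambda(1)=0$ need not be sublinear on its own (e.g.\ $\Lambda(z)=z-1-\log z$), so one must really use the slowly-varying/scaling structure (\ref{eq:lambdaz}) that the CCP framework enforces on the admissible $\Lambda$. Everything else is a routine consequence of $\Lambda\in\mathcal C^2$ being strictly concave and increasing.
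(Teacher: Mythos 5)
Your treatment of SK1 and SK2 is essentially the paper's argument in different clothing: the paper maximizes with a Lagrange multiplier and shows that $f(z)=\Lambda(z)-z\Lambda'(z)$ is strictly increasing because $f'(z)=-z\Lambda''(z)>0$, which is exactly your statement that $\phi'(p)=\Lambda(1/p)-\tfrac1p\Lambda'(1/p)$ is injective, i.e.\ $\phi''(p)=p^{-3}\Lambda''(1/p)<0$. Your concavity/Jensen framing is slightly stronger, since it certifies a global maximum on the closed simplex rather than just a unique interior critical point, but it is the same computation. Where you genuinely depart from the paper is SK3. The paper applies l'H\^opital to get $\lim_{y\to\infty}\Lambda(y)/y=\lim_{y\to\infty}\Lambda'(y)$ and then simply asserts this is $0$; as you correctly observe, positivity, strict concavity, strict monotonicity and $\Lambda(1)=0$ only force $\Lambda'$ to decrease to some $c\ge 0$, not to $0$. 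Your repair --- rule out $c>0$ because it would give $\Lambda(\lambda z)/\Lambda(z)\to\lambda$, contradicting the slow-variation property of Proposition \ref{Sprop:Lambdaz} --- is sound and is a real improvement over the paper's one-line claim. The trade-off is that Proposition \ref{Sprop:Lambdaz} is proved only for \emph{filling} CCPs, so your version of SK3 carries that extra hypothesis (or, equivalently, requires taking the scaling property (\ref{eq:lambdaz}) as part of the admissible class of $\Lambda$'s), whereas the proposition as stated claims SK3 for every $\Lambda$ in Definition \ref{def:compact}; under that literal reading the gap you found is in the paper, not in your argument.

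Two small corrections. Your purported witness $\Lambda(z)=z-1-\log z$ is convex ($\Lambda''(z)=1/z^2>0$), so it does not illustrate the point; take instead $\Lambda(z)=(z-1)+\log z$, which is strictly increasing, strictly concave, vanishes at $1$, and has $\Lambda(z)/z\to 1\neq 0$. Also note that SK1 as the paper proves it does not worry about continuity at $p=0$ at all, so your observation that the continuous extension of $\phi$ to $[0,1]$ \emph{requires} the SK3 limit is an additional (correct) point rather than a restatement.
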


\begin{proof}
$\\$

\noindent
To demonstrate {\bf SK1}, it is enough to observe that $S_\Lambda$ is only function of the probabilities and to take into account that, by assumption, $\Lambda\in{\cal C}^2$, therefore, $S_\Lambda$ continuous.
$\\$

\noindent
To demonstrate {\bf SK2}, we need to maximize the functional $\psi$, defined as:
\[
\psi=S_\Lambda(\eta(t))-\alpha\left (\sum_{x(t)\in \Omega(t)}p(x(t))-1\right)\,,
\] 
where $\alpha$ is a Lagrangian multiplier implementing the normalization constraint.
Maximizing $\psi$ with respect to a $p(x(t))$yields:
\[
0=\frac{\partial \psi}{\partial p(x(t))}=\Lambda(z)-z\Lambda'(z)-\alpha\,,
\]
where $z=1/p(x(t))$ and $\Lambda'$ is the first derivative of $\Lambda$.
Note that if the equation does not depend explicitly on $x(t)$ and if it has a unique solution then the proposition is proved, since all $p(x(t))$ have the same value. To see that a unique solution exists we need to show that 
$f(z)=\Lambda(z)-z\Lambda'(z)$ is strictly monotonic. To see that, it is enough to note that the first derivative of $f$ is given by $f'(z)=-z\Lambda''(z)>0$, since, by definition of compactness, $\Lambda\in {\cal C}^2$ and strictly concave; and therefore $\Lambda''(z)<0$.
$\\$

\noindent
Finally, to demonstrate that $S_\Lambda$ satisfies {\bf SK3} we apply the l'Hopital rule. First, by defining $y=1/z$, one has that: 
\[
\lim_{z\to 0}z\Lambda\left(\frac{1}{z}\right)=\lim_{y\to \infty}\frac{1}{y}\Lambda(y)\,.
\]
Then, considering that, by definition, $\Lambda\in {\cal C}^2$ is a strictly growing and concave function, one is led, after application to the l'Hopital rule for the limit, to:
\[
\lim_{y\to\infty}\frac{1}{y}\Lambda(y)=\lim_{y\to\infty}\Lambda'(y)=0\,,
\]
thereby concluding the proof.
\end{proof}
We observe that {\bf SK3} enables us to safely perform the sum for the entropy over the whole set of paths $\Omega(t)$, since $\sum_{x(t)\in \mathring\Omega(t)}(...)=\sum_{x(t)\in \Omega(t)}(...)$. 

\subsection{Simple CCP's}
$\\$

Now we define the condition of {\em simplicity}, namely, the property of processes by which the contributions to the entropy from paths outside the typical set vanish\footnote{It is however conceivable that CCPs exist that are not simple and the entropy in the limit has singular contributions from complement of the typical set. Even in probabilistic terms the complement of the typical set has measure zero, the presence of a large amount of highly improbable paths could give a non vanishing contribution to the entropy.}. 
\begin{definition}\label{def:intrinsicsimplicity}
Let $\eta$ be a filling CCP in $(\Lambda,g)$ with a typical localizer sequence $\epsilon_1, . . .,\epsilon_t, . . .$. Let $ A^c[\epsilon_t]=\Omega(t)\setminus A[\epsilon_t]$ be the complement of the typical set
in the well formed interior of $\Omega(t)$. We call $\eta$ \textbf{simple} if:
\[
\lim_{t\to \infty}\frac{1}{g(t)}S_\Lambda( A^c[\epsilon_t])=0\,.
\]
\end{definition}
Given a filling CCP in $(\Lambda,g)$  $\eta$, one can check if the simplicity condition is satisfied as follows: Let $p^*(x(t))$ be such that
\[
p^*(x(t))=\min_{\mathring\Omega(t)}\{p(x(t))\}\,,
\]
--recall that we select among those paths $x(t)\in \mathring\Omega(t)$ i.e., those by which $p(x(t))>0$. Then, assume the extreme case by which $(\forall x(t)\in  A^c[\epsilon_t], p(x(t))=p^*(x(t))$, thereby maximizing $\Lambda(1/p(x(t)))$ and, in consequence, the contribution of the complementary of the typical set to the entropy. If the following limit holds:
\[
\lim_{t\to\infty}\frac{\delta_t}{g(t)}\Lambda\left(\frac{1}{p^*(x(t))}\right)=0\,,
\]
then the process is simple. Note that the converse may not be true, there can be processes by which this proof does not hold but still, they are simple. For that, one must explore other strategies. 

\subsection{Generalized entropies and the typical set}
$\\$

We go now to the next step in the characterization of the typical set: For processes satisfying the simplicity condition, the typical set can be defined in terms of the generalized entropy $S_\Lambda$. This is consequence of the following theorem:
\begin{theorem}
\label{STh:Simple}
If $\eta$ is a simple CCP in $(\Lambda,g)$ with typical localizer sequence $\epsilon_1, . . ., \epsilon_t, . . .$ and corresponding sequence of typical sets $A[\epsilon_1], . . ., A[\epsilon_t], . . .]$, then:
\[
\lim_{t\to\infty}\frac{S_\Lambda(\eta(t))}{g(t)} =\lim_{t\to\infty}\frac{S_\Lambda(A[\epsilon_t])}{g(t)} =1\,.
\]
\end{theorem}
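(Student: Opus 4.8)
The plan is to decompose the entropy sum over the path sample space into the contribution of the typical set and that of its complement, kill the latter with the simplicity hypothesis, and pin down the former using the defining inequalities of $A[\epsilon_t]$ together with the fact that the typical set carries almost all the probability. This is the generalized-entropy analogue of the classical AEP cardinality estimates.

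First I would use {\bf SK3} (Proposition \ref{Sprop:SKs}) to discard the zero-probability paths and split
\[
S_\Lambda(\eta(t)) = S_\Lambda(A[\epsilon_t]) + S_\Lambda(A^c[\epsilon_t]),
\]
which is legitimate because every summand $p(x(t))\Lambda(1/p(x(t)))$ is nonnegative: $p(x(t))\in(0,1]$ forces $1/p(x(t))\ge 1$, and $\Lambda$ is positive on $[1,\infty)$ with $\Lambda(1)=0$. Dividing by $g(t)>0$ and invoking simplicity (Definition \ref{def:intrinsicsimplicity}), $\tfrac{1}{g(t)}S_\Lambda(A^c[\epsilon_t])\to 0$, so it suffices to prove $\tfrac{1}{g(t)}S_\Lambda(A[\epsilon_t])\to 1$; the claim for $S_\Lambda(\eta(t))/g(t)$ then follows by adding the two limits.

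For the main estimate I would use only the set-theoretic definition of $A[\epsilon_t]$: every $x(t)\in A[\epsilon_t]$ satisfies $(1-\epsilon_t)g(t)\le \Lambda(1/p(x(t)))\le (1+\epsilon_t)g(t)$. Multiplying by $p(x(t))\ge 0$ and summing over $A[\epsilon_t]$, and using $\sum_{x(t)\in A[\epsilon_t]}p(x(t))=\mathbb{P}(A[\epsilon_t])$, gives
\[
(1-\epsilon_t)\,g(t)\,\mathbb{P}(A[\epsilon_t]) \le S_\Lambda(A[\epsilon_t]) \le (1+\epsilon_t)\,g(t)\,\mathbb{P}(A[\epsilon_t]).
\]
Since $\epsilon_1,\epsilon_2,\dots$ is a typical localizer of $\eta$ we have $\mathbb{P}(A[\epsilon_t])\to 1$ (equivalently $\mathbb{P}(A[\epsilon_t])>1-\delta_t$ for some $\delta_t\to 0$; cf. Theorem \ref{Stheorem:Typical}), and trivially $\mathbb{P}(A[\epsilon_t])\le 1$. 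Hence, for $t$ large enough that $1-\epsilon_t\ge 0$,
\[
(1-\epsilon_t)(1-\delta_t) \le \frac{S_\Lambda(A[\epsilon_t])}{g(t)} \le 1+\epsilon_t,
\]
and letting $t\to\infty$ with $\epsilon_t,\delta_t\to 0$ squeezes the middle term to $1$.

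I do not expect a serious obstacle: the only points requiring care are (i) the nonnegativity of the summands, which is what justifies both the split and the one-sided bounds, and (ii) keeping clear that the deterministic bounds on $\Lambda(1/p(x(t)))$ come from membership in $A[\epsilon_t]$ — a set defined by exactly those inequalities — whereas the convergence in probability enters only through $\mathbb{P}(A[\epsilon_t])\to 1$. The simplicity hypothesis does the only substantive work and enters in a single line.
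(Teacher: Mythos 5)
Your proof is correct and follows essentially the same route as the paper's: bound $S_\Lambda(A[\epsilon_t])/g(t)$ between $(1\pm\epsilon_t)$ times $\mathbb{P}(A[\epsilon_t])$ using the defining inequalities of the typical set, then dispose of the complement via the simplicity hypothesis. If anything, your version is slightly more careful, since you make the nonnegativity of the summands explicit and use the correct upper bound $(1+\epsilon_t)$ from $\mathbb{P}(A[\epsilon_t])\le 1$, where the paper writes $(1-\delta_t)(1+\epsilon_t)$, which does not follow from $\mathbb{P}(A[\epsilon_t])>1-\delta_t$ (though the limit is unaffected).
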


\begin{proof}
We will start with the second equality, namely:
\[
\lim_{t\to\infty}\frac{S_\Lambda(A[\epsilon_t])}{g(t)} =1\,.
\]
From the definition of typical sets we know that, for paths $x(t)\in A[\epsilon_t]$, it is true that:
\[
\frac{1}{\Lambda^{-1}((1+\epsilon_t)g(t))}\ \leq\ p(x(t))\ \leq\ \frac{1}{\Lambda^{-1}((1-\epsilon_t)g(t))}\,.
\]
In consequence, given that $\mathbb{P}(x(t)\in A[\epsilon_t]))>1-\delta_t$, one can bound $S_\Lambda(A[\epsilon_t])$ as:
\[
(1-\delta_t) (1-\epsilon_t)< \frac{S_\Lambda(A[\epsilon_t])}{g(t)}< (1-\delta_t) (1+\epsilon_t)\,.
\]
Since, by construction $\lim_{t\to\infty}\epsilon_t=\lim_{t\to\infty}\delta_t=0$, this second part of the theorem is proven.
From that, the statement of the theorem:
\[
\lim_{t\to\infty}\frac{S_\Lambda(\eta(t))}{g(t)} =\lim_{t\to\infty}\frac{S_\Lambda(A[\epsilon_t])}{g(t)} =1\,,
\]
follows directly given the assumption of simplicity.
\end{proof}
Therefore, as we advanced above, in simple  CCP's, one can define the typical set $A[\epsilon_t]$ as:
\[
A[\epsilon_t]=\left\{x(t)\in\Omega(t)\,:\, \left|\frac{1}{S_{\Lambda}(\eta(t))}\Lambda\left(\frac{1}{p(x(t))}\right)-1 \right|< \epsilon_t\right\}\,,
\]
and, consequently, the typical probabilities will be bounded by:
\begin{eqnarray}
(\forall x(t)\in A[\epsilon_t] )\;;\quad\quad\quad&&\nonumber\\
\,\frac{1}{\Lambda^{-1}(S_\Lambda(\eta(t))(1+\epsilon_t))}&<&p(x(t)) \nonumber\\
&<&\frac{1}{\Lambda^{-1}(S_\Lambda(\eta(t))(1-\epsilon_t))}\,.\nonumber
\end{eqnarray}
Finally, as shown in equations (\ref{eq:bbelow}) and (\ref{eq:babove}) of the main text, the cardinality of the typical set can be bounded, in terms of the generalized entropy $S_\Lambda$ as:
\begin{eqnarray}
(1-\epsilon_t)\Lambda^{-1}((1-\epsilon_t)S_\Lambda(\eta(t)))&<&|A[\epsilon_t]|\nonumber\\
&<& \Lambda^{-1}((1+\epsilon_t)S_\Lambda(\eta(t)))\,.\nonumber\\
\label{Seq:BoundS}
\end{eqnarray}
As a consequence of the above chain of inequalities, one can go further in the characterization of the generalized entropy and its relation to the typical set. 
Indeed, for simple CCPs,
$g(t)\sim S_\Lambda(\eta(t)\sim \Lambda(|A[\epsilon_t]|)$. We demonstrate that in the following proposition:
\begin{proposition}
\label{SI:prop:simplefillingtovolumetric}
Let $\eta$ be a simple CCP in $(g,\Lambda)$ with some typical localizer sequence $\epsilon_1, . . .,\epsilon_t, . . .$, then:
\[
\lim_{t\to\infty}\frac{\Lambda(|A[\epsilon_t]|)}{S_\Lambda(\eta(t))}\ =\ 1\,.
\]
\end{proposition}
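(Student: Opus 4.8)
The plan is to apply the strictly increasing function $\Lambda$ to the cardinality bounds (\ref{Seq:BoundS}) and then squeeze, using Theorem \ref{STh:Simple} and the dilation-invariance of $\Lambda$ at infinity supplied by Proposition \ref{Sprop:Lambdaz}. First I would record that $S_\Lambda(\eta(t))\to\infty$: since $g$ is strictly increasing and, as a compact scale, unbounded, Theorem \ref{STh:Simple} gives $S_\Lambda(\eta(t))\sim g(t)\to\infty$. Together with $\epsilon_t\to 0$ this yields $(1\pm\epsilon_t)S_\Lambda(\eta(t))\to\infty$, and hence, through (\ref{Seq:BoundS}) and the monotonicity of $\Lambda^{-1}$, that $|A[\epsilon_t]|\to\infty$; in particular every quantity below eventually lies in the domain $[1,\infty)$ on which $\Lambda$ is defined and strictly increasing.

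For the upper estimate, applying $\Lambda$ to the right-hand inequality of (\ref{Seq:BoundS}) gives at once
\[
\Lambda(|A[\epsilon_t]|)\ <\ \Lambda\!\big(\Lambda^{-1}((1+\epsilon_t)S_\Lambda(\eta(t)))\big)\ =\ (1+\epsilon_t)\,S_\Lambda(\eta(t)),
\]
so $\limsup_{t\to\infty}\Lambda(|A[\epsilon_t]|)/S_\Lambda(\eta(t))\le 1$. For the lower estimate I would put $z_t=\Lambda^{-1}((1-\epsilon_t)S_\Lambda(\eta(t)))$, so that $\Lambda(z_t)=(1-\epsilon_t)S_\Lambda(\eta(t))$ and $z_t\to\infty$. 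Applying $\Lambda$ to the left-hand inequality of (\ref{Seq:BoundS}) and using monotonicity gives $\Lambda(|A[\epsilon_t]|)>\Lambda((1-\epsilon_t)z_t)$, so it remains to show $\Lambda((1-\epsilon_t)z_t)/\Lambda(z_t)\to 1$. Since $0<1-\epsilon_t<1$, monotonicity already gives that this ratio is $\le 1$; for the matching lower bound, fix any $\lambda\in(0,1)$ and choose $t_\lambda$ with $1-\epsilon_t>\lambda$ for all $t\ge t_\lambda$, so that $\Lambda((1-\epsilon_t)z_t)\ge\Lambda(\lambda z_t)$ and, since $z_t\to\infty$ and a simple CCP is filling by Definition \ref{def:intrinsicsimplicity}, Proposition \ref{Sprop:Lambdaz} yields $\Lambda(\lambda z_t)/\Lambda(z_t)\to 1$. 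Hence $\liminf_{t\to\infty}\Lambda((1-\epsilon_t)z_t)/\Lambda(z_t)\ge 1$, so the ratio tends to $1$, and therefore
\[
\frac{\Lambda(|A[\epsilon_t]|)}{S_\Lambda(\eta(t))}\ >\ \frac{\Lambda((1-\epsilon_t)z_t)}{\Lambda(z_t)}\,(1-\epsilon_t)\ \longrightarrow\ 1 .
\]
Combining the two one-sided estimates by the squeeze theorem gives $\lim_{t\to\infty}\Lambda(|A[\epsilon_t]|)/S_\Lambda(\eta(t))=1$.

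The only non-routine point is the passage $\Lambda((1-\epsilon_t)z_t)/\Lambda(z_t)\to 1$: Proposition \ref{Sprop:Lambdaz} is stated for a \emph{fixed} dilation factor, whereas here the factor $1-\epsilon_t$ itself drifts with $t$. The device above — interposing a fixed $\lambda<1$, sandwiching by monotonicity of $\Lambda$, and noting that the resulting lower bound $\ge 1$ holds for every such $\lambda$ — resolves this, and everything else reduces to the direct inequalities obtained by applying $\Lambda$ to (\ref{Seq:BoundS}) together with $S_\Lambda(\eta(t))\to\infty$.
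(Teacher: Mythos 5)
Your proof is correct and follows essentially the same route as the paper's: apply $\Lambda$ to the cardinality bounds (\ref{Seq:BoundS}), dispose of the upper bound by $\Lambda\circ\Lambda^{-1}=\mathrm{id}$, and handle the lower bound by setting $z_t=\Lambda^{-1}((1-\epsilon_t)S_\Lambda(\eta(t)))$ and invoking Proposition \ref{Sprop:Lambdaz}. Your only departure is a welcome one: the paper applies Proposition \ref{Sprop:Lambdaz} directly even though the dilation factor $1-\epsilon_t$ drifts with $t$, whereas you close that small gap by interposing a fixed $\lambda<1$ and sandwiching via monotonicity.
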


\begin{proof}
from equation (\ref{Seq:BoundS}), one can derive the following chain of inequalities:
\begin{eqnarray}
\Lambda((1-\epsilon_t)\Lambda^{-1}((1-\epsilon_t)S_\Lambda(\eta(t))))&<&\Lambda(|A[\epsilon_t]|)\nonumber\\
&<& (1+\epsilon_t)S_\Lambda(\eta(t))\,.\nonumber
\end{eqnarray}
The last term has no difficulties. To explore the behaviour of the first one, just rename the term:
\[
z\equiv \Lambda^{-1}((1-\epsilon_t)S_\Lambda(\eta(t)))\,,
\]
and rewrite the first term of the inequality as:
\[
\Lambda((1-\epsilon_t)\Lambda^{-1}((1-\epsilon_t)S_\Lambda(\eta(t))))=\Lambda((1-\epsilon_t)z)\,.
\]
We know, from proposition \ref{Sprop:Lambdaz}, that the functions we are dealing with behave such that:
\[
\lim_{z\to\infty}\frac{\Lambda (\lambda z)}{\Lambda(z)} \to 1\,,\, (\forall z>0).
\]
As a consequence:
\[
\frac{\Lambda((1-\epsilon_t)\Lambda^{-1}((1-\epsilon_t)S_\Lambda(\eta(t))))}{S_\Lambda(\eta(t))}\to 1\,.
\]
Therefore, since also the third term goes trivially to $\sim S_\Lambda(\eta(t))$, we can conclude that:
\[
\frac{\Lambda(|A[\epsilon_t]|)}{S_\Lambda(\eta(t))}\ \to1\,,
\]
as we wanted to demonstrate.
\end{proof}



\section{The Chinese Restaurant process}
\label{SSec:CRP}

We now turn to analysing the version of the Chinese restaurant process with memory (CRPM) discussed in the main body of the paper. The version presented here is a variation of the standard Chinese Restaurant process as found in \cite{Pitman:2006, Bassetti:2009}.
\subsection{Definition and basics}
Suppose a restaurant with an infinite set of tables $m_1, . . .,m_n, . . .$ each with infinite capacity. The first customer enters and sits at the first table $m_1$. The second customer now has a choice to also sit down at the first table $m_1$ together with the first customer or to choose a free table $m_2$, each with probability $1/2$.  Let $m(t-1)$ be the number of occupied tables at $t-1$. If the $t$'th customer finds that $m(t-1)$ tables are already occupied by some guests, then again the customer will choose one of the non-empty tables:  
\[
m_1, . . .,m_{m(t-1)}\,,
\]
each with probability $1/(m(t-1)+1)$, in which case $m(t)=m(t-1)$, or the next empty table $m_{m(t-1)+1}$, also with probability 
$1/(m(t-1)+1)$. In this later case $m(t)=m(t-1)+1$. The key point is therefore the number of occupied tables $m(t)$. We observe that the amount of occupied tables it can be rewritten as a stochastic recurrence:
\begin{equation}
m(t+1)=m(t)+\zeta(m(t))\,,
\label{Seq:m(t)}
\end{equation}
where $\zeta(m(t))$ is a random variable by which:
\[
p(\zeta(m(t))=0)=\frac{m(t)}{m(t+1)}\;,\, p(\zeta(m(t))=1)=\frac{1}{m(t+1)}\,.
\]
Clearly,
\[
m(t)=1+\sum_{t'\leq t} \zeta(m(t))\,,
\]
is a non decreasing function in $t$.  Now let us define $M_k(t)$ as a random variable taking values uniformly at random over the set $m_1, . . .,m_k, m_{k+1}$ at time $t$. The sequence of random variables accounting describing the CRPM $\eta(t)$ can be written as:
\[
\eta(t)=M_1(1), M_1(2), M_{m(2)}(3), . . ., M_{m(t)}(t+1), . . .\,,
\]
leading to paths $x(t)$ of the kind:
\[
x(t)=m_1, m_1, m_2, m_1, m_2, m_3, m_2, m_2, m_1, m_3, . . .
\]
We emphasize that {\em the tables visited are distinguishable and can visited repeatedly}. 
The CRPM however does not fill up $\Omega(t)$, i.e. there exist elements in $\Omega(t)=\times_{t'=1}^t\Omega_{t'}$, that are not potential paths of the CRP. This includes all sequences that select a table $m_i$ without ever having chosen some table $m_j$, with $j<i$ before. For example, the path $x(t)=(m_1,m_2,m_1,m_2,m_3,m_1,m_5,m_4,m_3,\cdots)$ is not possible, because $m_5$ is chosen before $m_4$. The CRPM therefore gives us the opportunity to introduce sampling spaces conditional to a particular well formed path. In this particular case, it is enough to observe that the sampling space some well formed path $x(t)\in\Omega(t-1)$ {\em sees} at time $t$ is given by:
\begin{equation}
\label{eq:condstatesamplespace}
\Omega_t[x(t-1)]=\{m_1,m_2,\cdots,m_{m[x](t-1)]+1} \}\,,
\end{equation} 
where $m[x](t-1)$ is the number of different tables the well formed CRPM path $x(t-1)$ has sampled at time $t-1$. By convention, we define $m[x](1)=1$. 

\subsection{Statistics of the CRPM}
$\\$

We start computing the probability of a particular path $x(t)\in\mathring{\Omega}(t)$.
\begin{proposition}
\label{Sprop:pxt}
Let process $\eta(t)$ be the CRPM and $x(t)\in \mathring{\Omega}(t)$ be a given path of the process up to time $t$. Let $m[x](t')$ be the number of occupied tables in the restaurant 
associated with the path $x(t')$ at time $t'$. Then the probability to observe the particular sequence of tables, $p(x(t))$ is given by:
\[
p(x(t))=\prod_{t'=1}^{t}\left(m[x](t')+1\right)^{-1}\,.
\]
\end{proposition}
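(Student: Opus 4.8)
The statement is essentially a direct consequence of the chain rule, so the plan is to prove it by induction on $t$, exploiting the fact that for any categorial process the path probability factorizes through the hierarchy of one-step transition probabilities, $p(x(t)) = p(x_t\mid x(t-1))\,p(x(t-1))$. The base case is the observation that the first customer sits at $m_1$ deterministically, so $p(x(1))=1$, and the only input specific to the CRPM needed for the inductive step is the explicit form of the CRPM transition probabilities, which I would read off directly from the definition of the process and from (\ref{eq:condstatesamplespace}).

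For the inductive step I would fix a well-formed path $x(t)\in\mathring\Omega(t)$. Its prefix $x(t-1)$ is again well-formed (deleting the last symbol cannot create an illegal sub-sequence), so the inductive hypothesis applies to $p(x(t-1))$. By (\ref{eq:condstatesamplespace}), conditioned on the history $x(t-1)$ the table sampled at step $t$ ranges over the finite set $\Omega_t[x(t-1)] = \{m_1,\dots,m_{m[x](t-1)+1}\}$, which contains exactly $m[x](t-1)+1$ tables, and the defining rule of the CRPM is that this choice is made uniformly on that set. Hence
\[
p(x_t\mid x(t-1)) \;=\; \frac{1}{m[x](t-1)+1}\,,
\]
the positivity being guaranteed precisely because $x(t)\in\mathring\Omega(t)$: an illegal choice of $x_t$ would carry conditional probability $0$, contradicting well-formedness. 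Multiplying this by $p(x(t-1))$ from the inductive hypothesis and relabelling the running index yields the asserted product.

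I do not expect a real obstacle here; the proposition is a routine consequence of the chain rule together with (\ref{eq:condstatesamplespace}). The one place that genuinely needs care is the index and boundary bookkeeping: one must pin down whether $m[x](t')$ is meant to count the occupied tables just before or just after the $t'$-th customer is seated, slot the deterministic first customer (which contributes a factor $1$) correctly against the convention $m[x](1)=1$, and verify that these choices make the product of one-step probabilities come out as exactly $\prod_{t'=1}^{t}(m[x](t')+1)^{-1}$ rather than an index-shifted variant. I would also record explicitly — since it is used silently here and in the subsequent analysis of the CRPM — that each conditional sample space $\Omega_t[x(t-1)]$ is finite even though the ambient $\Omega_t$ is infinite, so that "uniform on $\Omega_t[x(t-1)]$" is well defined.
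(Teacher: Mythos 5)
Your chain-rule induction is exactly the ``direct calculation'' the paper invokes without spelling out, so the two approaches coincide; your write-up is in fact more explicit than the paper's one-line proof. The bookkeeping issue you flag is the only genuine subtlety, and it resolves under the paper's convention that the deterministic first customer enters at $t_0=0$ and that $m[x](t')$ counts the tables occupied when the customer at time $t'$ arrives (i.e.\ before that customer sits), which is what makes the product run over $t'=1,\dots,t$ rather than an index-shifted range.
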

\begin{proof}
The proposition follows from direct calculation.
\end{proof}
Now we will see that the sequence corresponding to the number of occupied tables converges to a tractable functional form.

\begin{proposition}
\label{prop:tosqrtmu}
Given the sequence of occupied tables of the CRPM $m(1), . . .,m(t)$ as defined in equation (\ref{Seq:m(t)}), then:
\[
\frac{m(t)}{\sqrt{2t}}\to 1\,,
\]
in probability.
\end{proposition}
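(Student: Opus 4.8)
The plan is to track not $m(t)$ but its square, for which the recurrence in equation~(\ref{Seq:m(t)}) linearizes. Since $\zeta(m(t))\in\{0,1\}$ we have $\zeta(m(t))^2=\zeta(m(t))$, so squaring equation~(\ref{Seq:m(t)}) gives $m(t+1)^2-m(t)^2=(2m(t)+1)\zeta(m(t))$, and using $p(\zeta(m(t))=1)=1/(m(t)+1)$,
\[
\mathbb{E}\bigl[m(t+1)^2-m(t)^2\,\big|\,m(t)\bigr]=\frac{2m(t)+1}{m(t)+1}=2-\frac{1}{m(t)+1}\,.
\]
Hence $N_t:=m(t)^2-\sum_{t'=1}^{t-1}\bigl(2-\tfrac{1}{m(t')+1}\bigr)$ is a martingale with $\mathbb{E}[N_t]=N_1=1$, and writing $m(t)^2=N_t+2(t-1)-\sum_{t'=1}^{t-1}\tfrac{1}{m(t')+1}$ the whole task reduces to showing that the two error terms on the right are $o(t)$ in probability.

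First I would show $m(t)\to\infty$ almost surely: $m(\cdot)$ is non-decreasing, and if it stayed bounded by some $K$ from some time onward with positive probability, then past that time every customer would open a new table with probability at least $1/(K+1)>0$, so Borel--Cantelli forces a further increment, a contradiction. Therefore $1/(m(t')+1)\to0$ a.s.; being bounded by $1/2$, Ces\`aro averaging gives $\tfrac1t\sum_{t'<t}\tfrac1{m(t')+1}\to0$ a.s., hence in probability. Next I would bound $N_t$ in $L^2$: martingale increments being orthogonal, $\operatorname{Var}(N_t)=\sum_{t'<t}\mathbb{E}\operatorname{Var}(N_{t'+1}-N_{t'}\mid\mathcal F_{t'})$ with $\operatorname{Var}(N_{t'+1}-N_{t'}\mid\mathcal F_{t'})=(2m(t')+1)^2\tfrac{m(t')}{(m(t')+1)^2}=O(m(t'))$; the crude inequality $2-\tfrac1{m+1}\le2$ gives $\mathbb{E}[m(t')^2]\le2t'$, hence $\mathbb{E}[m(t')]\le\sqrt{2t'}$ and $\operatorname{Var}(N_t)=O(t^{3/2})=o(t^2)$, so $N_t/t\to0$ in probability by Chebyshev. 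Combining, $m(t)^2/t\to2$ in probability, which is equivalent to the claim $m(t)/\sqrt{2t}\to1$ in probability.

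The main obstacle is not any single estimate but assembling the in-probability limits cleanly (a Slutsky-type argument for the ratio, together with the almost-sure divergence $m(t)\to\infty$ needed to kill the $\sum\tfrac1{m(t')+1}$ term). A variant that avoids squaring is to use the waiting times $T_k$ until the restaurant has $k$ occupied tables: the increments $T_{k+1}-T_k$ are \emph{independent} with a Geometric$(1/(k+1))$ law, so $\mathbb{E}[T_k]=\sum_{j=1}^{k}j=k(k+1)/2$ and $\operatorname{Var}(T_k)=\sum_{j=1}^{k-1}j(j+1)=O(k^3)$; Chebyshev then gives $T_k/(k^2/2)\to1$ in probability, and since $m(t)\ge k\iff T_k\le t$, inverting this relation yields $m(t)/\sqrt{2t}\to1$ in probability. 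I would present whichever of the two is quicker to make fully rigorous.
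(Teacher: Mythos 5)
Your proposal is correct, and your primary argument takes a genuinely different route from the paper's. The paper works with the waiting times: it defines $\Delta t_k$, the number of steps during which $m(\cdot)=k$, notes these are independent geometric variables with mean $\sim k$, renormalizes to $\delta t_k=\Delta t_k/k$ with unit mean and bounded variance, invokes the law of large numbers for the sum, and inverts the relation $\sum_{k\leq m} k\,\delta t_k = t$ to read off $m(t)\sim\sqrt{2t}$. That is precisely your second variant ($T_k$ as a sum of independent geometrics, Chebyshev, then inversion via $m(t)\geq k\iff T_k\leq t$), so on that branch you coincide with the paper. Your main argument instead linearizes the recursion by squaring, extracts the martingale $N_t=m(t)^2-\sum_{t'<t}(2-\tfrac{1}{m(t')+1})$, controls it in $L^2$ via the conditional variance bound $O(m(t'))$ together with the a priori estimate $\mathbb{E}[m(t')^2]\leq 2t'$, and kills the correction term using $m(t)\to\infty$ a.s.\ plus Ces\`aro averaging. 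What this buys you is a self-contained second-moment proof that never needs the independence of the inter-arrival times nor the slightly delicate inversion step (which the paper handles rather informally, writing ``there exists a monotonously increasing function $\mu(t)$'' and appealing to the law of large numbers without spelling out the variance control); what it costs is the auxiliary lemma $m(t)\to\infty$ a.s., where your Borel--Cantelli step should be stated as the conditional (L\'evy) version since the success events are not independent, though the conditional success probability is uniformly bounded below on the event $\{m\leq K\}$, so the argument goes through. The waiting-time route is shorter because all the randomness sits in genuinely independent summands; either version is a valid and, in places, tighter rendering than the paper's own sketch (which, incidentally, uses the geometric parameter $1/k$ where $1/(k+1)$ is exact --- immaterial asymptotically).
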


\begin{proof}
Consider the random variable $\Delta t_k$ denoting the amount of steps by which $m(t)=k$. $\Delta t_k$ is a geometric random variable with associated law:
\[
p(\Delta t_k=i)=\left(1-\frac{1}{k}\right)^{i-1}\frac{1}{k}\;,\, \langle \Delta t_k\rangle=k\;,\, \sigma^2(\Delta t_k)=k^2-k\,.
\]
Now we construct a new set of renormalized random variables, $\delta t_1, . . .,\delta t_k$ as:
\[
\delta t_k\equiv\frac{\Delta t_k}{k}\;,\,\langle \delta t_k\rangle=1\;,\,\sigma^2(\delta t_k)=1-\frac{1}{k}\,. 
\]
In that context, the sum of $\delta t_1, . . .,\delta t_k$ is the sum of $k$ random variables with mean $1$ and $\sigma^2<1$. Therefore, there exists a monotonously increasing function, $\mu(t)$ by which, by the law of large numbers, for each pair $\epsilon,\delta<0$, there exists $t_0$ such that, for $t>t_0$:
\[
\mathbb{P}\left(\left|\frac{\sum_{k\leq m(t)}\delta t_k}{\mu(t)}-1\right|>\epsilon\right)<\delta\,.
\]
Notice that, in this setting, we have that the deviations behave close to a discrete random walk centered at $0$ and with step length $\sigma^2<1$.
Since for all $\delta t_k$, $\langle \delta t_k\rangle=1$:
\[
\frac{m(t)}{\mu(t)}\to 1\;,\;{\rm (in\; probability)}\,,
\]
and, since, by construction:
\[
\sum_{k\leq \mu(t)}k \delta t_k =t\,,
\]
one has that:
\[
\mu(t)\sim \sqrt{2t}\,.
\]
where "$\sim$" means asymptotically equivalent, as we wanted to demonstrate.
\end{proof}

\subsection{The typical set of the CRP}
\begin{theorem}
The CRPM is compact with compact scale $(\Lambda, g(t))=(\log, \frac{t}{2}\log t)$.
\label{Sth:compactCrp}
\end{theorem}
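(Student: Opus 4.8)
The plan is to make $-\log p(\eta(t))$ explicit through Proposition~\ref{Sprop:pxt} and then squeeze it between $\tfrac t2\log t\,(1\pm o_{\mathbb P}(1))$ using Proposition~\ref{prop:tosqrtmu}. First one checks that $\Lambda=\log$ and $g(t)=\tfrac t2\log t$ have the regularity demanded by Definition~\ref{def:compact} on $[1,\infty)$: $\log$ is $\mathcal C^2$, strictly increasing and strictly concave, with $\log 1=0$ and $\log t\to\infty$, while $g\in\mathcal C^2$ is strictly increasing for $t\ge1$ (its behaviour on a bounded initial segment is immaterial for a $t\to\infty$ statement). By Proposition~\ref{Sprop:pxt},
\[
\log\frac{1}{p(\eta(t))}=\sum_{t'=1}^{t}\log\bigl(m(t')+1\bigr),
\]
with $m(\cdot)$ the random number of occupied tables, so it suffices to show that this sum, divided by $g(t)$, tends to $1$ in probability. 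Fix $\epsilon\in(0,1)$ and set $E_t=\{\,|m(t)/\sqrt{2t}-1|<\epsilon\,\}$; by Proposition~\ref{prop:tosqrtmu}, $\mathbb P(E_t)\to1$.

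For the upper bound I would use monotonicity of $m$: on $E_t$ every $m(t')\le m(t)<(1+\epsilon)\sqrt{2t}$, so
\[
\frac{1}{g(t)}\log\frac{1}{p(\eta(t))}\ \le\ \frac{2\log\!\bigl((1+\epsilon)\sqrt{2t}+1\bigr)}{\log t}\ \longrightarrow\ 1\qquad(t\to\infty),
\]
the limit being deterministic. Hence $\mathbb P\bigl(g(t)^{-1}\log(1/p(\eta(t)))>1+\epsilon'\bigr)\le\mathbb P(E_t^{\,c})\to0$ for every $\epsilon'>0$.

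The lower bound is the substantive step. Re-sum the sequence by the value $k$ taken by $m$: if $\Delta t_k$ denotes the number of steps during which $m=k$ --- independent across $k$, with $\mathbb E[\Delta t_k]=k$ and $\mathrm{Var}(\Delta t_k)=k^2-k$, as recorded in the proof of Proposition~\ref{prop:tosqrtmu} --- then, discarding the nonnegative contribution of the final, incomplete sojourn,
\[
\log\frac{1}{p(\eta(t))}\ \ge\ \sum_{k=1}^{m(t)-1}\Delta t_k\,\log(k+1)\ \ge\ \sum_{k=1}^{n_{-}-1}\Delta t_k\,\log(k+1)\quad\text{on }E_t,
\]
with $n_{-}=\lceil(1-\epsilon)\sqrt{2t}\rceil$, so the last sum now ranges over a \emph{deterministic} index set. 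Its mean is $\sum_{k=1}^{n_{-}-1}k\log(k+1)\sim\tfrac12 n_{-}^2\log n_{-}\sim\tfrac{(1-\epsilon)^2}{2}\,t\log t$ (comparison with $\int s\log s\,ds$), and its variance is $\sum_{k=1}^{n_{-}-1}\log^2(k+1)(k^2-k)=O\!\bigl(n_{-}^3\log^2 n_{-}\bigr)=o\!\bigl((t\log t)^2\bigr)$; Chebyshev then makes it $\tfrac{(1-\epsilon)^2}{2}t\log t\,(1+o_{\mathbb P}(1))$. Intersecting $E_t$ with this concentration event (both of probability $\to1$) gives $g(t)^{-1}\log(1/p(\eta(t)))\ge(1-\epsilon)^2-o_{\mathbb P}(1)$, so $\mathbb P\bigl(g(t)^{-1}\log(1/p(\eta(t)))<1-\epsilon'\bigr)\to0$ once $\epsilon$ is taken small relative to $\epsilon'$.

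Combining the two bounds and letting $\epsilon\downarrow0$ yields $g(t)^{-1}\log(1/p(\eta(t)))\to1$ in probability, i.e. the CRPM is compact in $(\log,\tfrac t2\log t)$. Note that no separate estimate of the final incomplete sojourn is required: it is nonnegative, hence harmless for the lower bound, and it is already subsumed in the crude bound $\log(1/p(\eta(t)))\le t\log(m(t)+1)$ used for the upper bound. The delicate point --- the main obstacle --- is passing from concentration of a sum with the \emph{random} upper limit $m(t)$ to one with a deterministic limit; this is taken care of purely by monotonicity of $m$ and the sandwich afforded by $E_t$, so that Chebyshev is only ever applied to sums $\sum_{k=1}^{n}\Delta t_k\log(k+1)$ with $n$ nonrandom, for which independence of the $\Delta t_k$ is immediate.
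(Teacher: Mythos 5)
Your proof is correct, but it follows a genuinely different route from the paper's. The paper writes $m(t')=\mu(t')+z(t')$ with $\mu(t')=\sqrt{2t'}$, expands $\log m(t')=\log\mu(t')+\log\bigl(1+z(t')/\mu(t')\bigr)$, argues that the normalized sum of the correction terms vanishes, and identifies $g(t)$ by evaluating $\sum_{t'\le t}\log\mu(t')$ as a Riemann sum $\sim\frac t2\log t$. You instead run a two-sided sandwich: the upper bound exploits monotonicity of $m$ together with the observation that the crude majorant $t\log\bigl(m(t)+1\bigr)$ is already asymptotically tight (since $\sum_{t'\le t}\log t'\sim t\log t$ loses only lower-order terms), while the lower bound re-indexes $\sum_{t'\le t}\log(m(t')+1)$ by the sojourn times $\Delta t_k$ and applies Chebyshev to a sum over a \emph{deterministic} index range $k\le n_--1$, obtained by intersecting with the event $\{m(t)\ge n_-\}$. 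What your route buys is rigor at exactly the point where the paper is thinnest: the paper's passage from the termwise statement $\log(1+z(t')/\mu(t'))/\log\mu(t')\to0$ in probability to the vanishing of the normalized sum of $t$ such terms requires a uniformity over $t'$ that is not supplied; your monotonicity/Chebyshev sandwich avoids ever having to control a sum of error terms with a random summand count, at the cost of a second-moment computation. What the paper's route buys is brevity and a direct derivation of the functional form of $g$, whereas in your argument $\frac t2\log t$ emerges from the two matching one-sided asymptotics. One cosmetic caveat: you quote $\mathbb E[\Delta t_k]=k$ and $\mathrm{Var}(\Delta t_k)=k^2-k$ from the paper, though with transition probability $1/(k+1)$ the sojourn at level $k$ is geometric with mean $k+1$; this off-by-one is the paper's, and it does not affect any of your asymptotics.
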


\begin{proof}
We need to demonstrate that:
\[
\frac{1}{\frac{t}{2}\log t}\log\left(\frac{1}{p(\eta(t))}\right)\to 1\,,\;\;{\rm (in\; probability)}\,.
\]
We first note that, according to the definition of the sequence of the number of occupied tables given in equation (\ref{Seq:m(t)}), and the statement of proposition \ref{Sprop:pxt}, one can rewrite the logarithmic term of the condition for compactness as:
\[
\log\left(\frac{1}{p(\eta(t))}\right)=\sum_{t'\leq t}\log(m(t'))\,.
\]
Now, let us define a new random variable $z(t)$ as follows:
\[
z(t)=m(t)-\mu(t)\,,
\]
with $\mu(t)=\sqrt{2t}$. Notice that, according to proposition \ref{prop:tosqrtmu}, we have that:
\[
\frac{z(t)}{\mu(t)}\to 0\;,\;{\rm (in\; probability)}\,.
\]
We can then rewrite condition of compactness, with $g(t)=\sum_{t'\leq t}\log(\mu(t))$, as:
\begin{widetext}
\begin{eqnarray}
\frac{1}{\sum_{t'\leq t}\log(\mu(t'))}\log\left(\frac{1}{p(\eta(t))}\right)&=&\frac{1}{\sum_{t'\leq t}\log(\mu(t'))}\sum_{t'\leq t}\log(m(t'))\nonumber\\
&=& \frac{1}{\sum_{t'\leq t}\log(\mu(t'))}\sum_{t'\leq t}\log(\mu(t')+z(t'))\nonumber\\
&=& \frac{1}{\sum_{t'\leq t}\log(\mu(t'))}\sum_{t'\leq t}\log(\mu(t'))+\log\left(1+\frac{z(t')}{\mu(t')}\right)\nonumber\\
&=&1+\frac{1}{\sum_{t'\leq t}\log(\mu(t'))}\sum_{t'\leq t}\log\left(1+\frac{z(t')}{\mu(t')}\right)\,.\nonumber
\end{eqnarray}
\end{widetext}
It remains to see that the second term of the sum goes to $0$. Clearly, by proposition \ref{prop:tosqrtmu}:
\[
\frac{\log \left(1+\frac{z(t)}{\mu(t)}\right)}{\log(\mu(t))}\to 0\;,\;\;{\rm (in\; probability)}\,.
\]
Consequently:
\[
\frac{1}{\sum_{t'\leq t}\log(\mu(t'))}\log\left(\frac{1}{p(\eta(t))}\right)\to 1\;,\;\;{\rm (in\; probability)}\,.
\]
Finally, we need to compute the asymptotic form of  $\sum_{t'\leq t}\log(\mu(t'))$. Observing that we have a Riemann sum, one can consider:
\[
\sum_{t'\leq t}\log(\mu(t'))\sim\int^t \log(\sqrt{2t'})dt' \sim \frac{t}{2}\log t\,,
\]
thus concluding the proof.
\end{proof}

\subsection{The entropy of the CRP}
$\\$

We demonstrate here that the CRPM is simple. In consequence, the typical set can be computed as a function of the entropy. In that case, one can show that a suitable choice is $\Lambda=\log$ --chosen by the sake of simplicity--, implying that the associated entropy is Shannon path entropy. However, we emphasize that this choice is not unique. Given a different choice of $g$, one could have another $\Lambda$ by which the process is also simple and, therefore, the typical set could be defined through another form of entropy. We briefly comment this point in the next section, sketching how another potential pair $(\Lambda, g)$ functions would work as well.

\begin{theorem}
The CRPM is simple in $(\Lambda, g(t))=(\log, \frac{t}{2}\log t)$.
\label{Sth:CRPSimple}
\end{theorem}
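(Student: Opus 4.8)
The plan is to verify the sufficient condition for simplicity that was laid out just after Definition~\ref{def:intrinsicsimplicity}: it suffices to show that the worst-case contribution of the complement of the typical set to the entropy, namely $\frac{\delta_t}{g(t)}\Lambda\!\left(\tfrac{1}{p^*(x(t))}\right)$ with $\Lambda=\log$ and $g(t)=\tfrac{t}{2}\log t$, tends to zero. Here $p^*(x(t))=\min_{\mathring\Omega(t)}p(x(t))$ is the smallest path probability at time $t$. Since by Proposition~\ref{Sprop:pxt} we have $p(x(t))=\prod_{t'=1}^t (m[x](t')+1)^{-1}$, the path probability is minimized by the path that opens a new table at every step, i.e. $m[x](t')=t'$ for all $t'$, giving $p^*(x(t))=\prod_{t'=1}^t (t'+1)^{-1} = 1/(t+1)!$. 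Hence $\log\!\big(1/p^*(x(t))\big)=\log\big((t+1)!\big)\sim t\log t$ by Stirling.

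With that estimate in hand, the quantity to control is $\frac{\delta_t}{g(t)}\log\big((t+1)!\big)\sim \frac{\delta_t\, t\log t}{\tfrac{t}{2}\log t}=2\delta_t$, which goes to $0$ because $\delta_t\to0$ by the definition of a typical localizer. So the key point is simply that the extreme worst-case path entropy grows like $t\log t$, only a constant factor larger than $g(t)=\tfrac{t}{2}\log t$, and the vanishing probability weight $\delta_t$ assigned to the complement kills that constant. I would state these two ingredients — the identification of $p^*$ via Proposition~\ref{Sprop:pxt} and the Stirling asymptotics $\log((t+1)!)\sim t\log t$ — as the two displayed steps, then conclude by invoking the sufficient criterion from the paragraph after Definition~\ref{def:intrinsicsimplicity}.

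The main obstacle, such as it is, is making sure the ``extreme case'' bound is actually legitimate here: one must check that $A^c[\epsilon_t]$ can be over-estimated by pretending \emph{every} path in it has probability $p^*(x(t))$ and multiplicity controlled by the total probability mass $\delta_t$. This is exactly the monotonicity argument sketched in the text — because $x\mapsto x\log(1/x)$ is increasing on the relevant range and $p^*$ is the minimal probability, replacing each $p(x(t))$ in $S_\Lambda(A^c[\epsilon_t])=\sum_{A^c[\epsilon_t]}p(x(t))\log(1/p(x(t)))$ by $p^*$ and using $\mathbb P(A^c[\epsilon_t])<\delta_t$ yields $S_\Lambda(A^c[\epsilon_t]) < \delta_t \log(1/p^*(x(t)))$ (one should be slightly careful that $p^* < 1/e$ so that we are in the increasing regime, which is trivially true for $t\ge1$). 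Dividing by $g(t)$ and taking $t\to\infty$ then finishes the proof. I do not expect any genuine difficulty beyond writing this cleanly; the content is entirely the Stirling estimate plus the already-established path-probability formula.
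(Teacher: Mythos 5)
Your proof is correct and follows essentially the same route as the paper's: identify the minimal-probability path (the one opening a new table at every step), obtain $\log\left(1/p^*(x(t))\right)\sim t\log t$ via Stirling, bound the complement's entropy contribution by $\delta_t\log\left(1/p^*(x(t))\right)$, and divide by $g(t)=\tfrac{t}{2}\log t$ to get $2\delta_t\to 0$. One small caution: the termwise bound you actually need is $\log\left(1/p(x(t))\right)\le\log\left(1/p^*(x(t))\right)$ applied while keeping the original weights $p(x(t))$ (which sum to at most $\delta_t$), not a replacement of $p$ by $p^*$ inside $p\log(1/p)$ --- the monotonicity of $x\mapsto x\log(1/x)$ you invoke would push that substitution in the wrong direction --- and the discrepancy between the paper's $1/t!$ and your $1/(t+1)!$ is an immaterial indexing difference.
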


\begin{proof}
Since the CRPM is compact with compact scale $(\Lambda, g(t))=(\log, \frac{t}{2}\log t)$, we know that there is a typical localizer sequence $\epsilon_1,. . .,\epsilon_t, . . .$ with associated $\delta_1, . . .,\delta_t, . . .$, such that $\lim_{t\to\infty}\epsilon_t=\lim_{t\to\infty}\delta_t=0$. The paths belonging to the typical set $A[\epsilon_t]$, are those satisfying:
\[
e^{-(1+\epsilon_t)\frac{t}{2}\log t}\leq p(x(t))\leq e^{-(1-\epsilon_t)\frac{t}{2}\log t}\,.
\]
In addition, the measure associated to the typical set $A[\epsilon_t]$ is given by:
\[
\mathbb{P}(A[\epsilon_t])\geq 1-\delta_t\,.
\]
In consequence, 
\begin{eqnarray}
(1-\epsilon_t)(1-\delta_t)\frac{t}{2}\log t&\leq& -\sum_{x(t)\in A[\epsilon_t]} p(x(t))\log p(x(t))\nonumber\\
&\leq& (1+\epsilon_t)(1-\delta_t)\frac{t}{2}\log t\,.\nonumber
\end{eqnarray}
Now let's consider that the complement of the typical set ${\Omega}(t)\setminus A[\epsilon_t]$, whose measure is:
\[
\mathbb{P}({\Omega}\setminus A[\epsilon_t])\leq\delta_t\,,
\]
is completely populated by those paths by which:
\[
x(t)\in {\Omega}\setminus A[\epsilon_t]\;,\, p(x^*(t))=\min_{\mathring{\Omega}(t)}\{p(x(t))\}\,,
\]
therefore, for all paths $x(t)\in \mathring{\Omega}(t)$:
\[
-\log p(x(t))\leq -\log p^*(x(t))\,.
\]
The least probable path is the one that increases the number of tables at every step, having it probability $p^*(x(t))$ of:
\[
p^*(x(t))=\frac{1}{t!}\,,
\]
leading to $-\log p^*(x(t))\sim t\log t$, thanks to the Stirling's approximation \cite{Abramowitz:1964}. In that context:
\[
-\sum_{x(t)\in {\Omega}(t)\setminus A[\epsilon_t]} p(x(t))\log p(x(t))\leq \delta t\log t\,.
\]
Collecting the above reasoning, and by observing that the defined entropy is actually the {\em Shannon path entropy}, $S_\Lambda(\eta(t))=H(\eta(t))$ \cite{Cover:2012}:
\[
H(\eta(t))=-\sum_{x(t)\in {\Omega}(t)} p(x(t))\log p(x(t))\,,
\]
one has that:
\begin{eqnarray}
(1-\epsilon_t)(1-\delta_t)\frac{t}{2}\log t &\leq& H(\eta(t))\nonumber\\
&\leq & (1+\epsilon_t)(1-\delta_t)\frac{t}{2}\log t+ \delta_t t\log t\,.\nonumber
\end{eqnarray}
In consequence, by defining $g(t)=\frac{t}{2}\log t$, one is led to:
\[
(1-\delta_t)(1-\epsilon_t)\leq \frac{H(\eta(t))}{g(t)}\leq (1-\delta_t)(1+\epsilon_t)+2\delta_t\,.
\]
Since we know that $\lim_{t\to\infty}\epsilon_t=\lim_{t\to\infty}\delta_t=0$ we conclude that:
\[
\frac{H(\eta(t))}{g(t)}\to 1\;,
\]
as we wanted to demonstrate.
\end{proof}

\subsection{A different compact scale $(\Lambda, g)$ for the CRP}
\label{secSI:Lg}
$\\$

We finally briefly comment how another compact scale made of different $(\Lambda, g)$ functions could be used to characterize the typical set and the generalized entropy of the CRP.  We avoid the technical details, for the sake of simplicity. We start computing the inverse function of $g$, $g^{-1}$:
\[
g^{-1}(z)\approx \frac{2z}{\mathbf{W}(2z)}\,,
\]
where $\mathbf{W}$ is the {\em Lambert} function \cite{Abramowitz:1964}, where only the positive, real branch is taken into account. 
Then we compose it with the $\log$ function, thereby defining a new function $\Lambda$ as:
\begin{equation}
\Lambda(z)=(g^{-1}\circ\log)(z)\sim \frac{2\log(z)}{\mathbf{W}(2\log(z))}\,.
\label{Seq:Lambdacirc}
\end{equation}
We observe that $\Lambda$ as above defined is a strictly growing, concave function with continuous second derivatives.
Clearly, $(g^{-1}\circ g)(t)\sim t$. Therefore, as a direct consequence of theorem \ref{Sth:compactCrp}, the CRPM is compact in $(\Lambda, t)$:
\[
\frac{1}{g^{-1}\left(\frac{t}{2}\log t\right)}(g^{-1}\circ \log)\left(\frac{1}{p(\eta(t))}\right)=\frac{1}{t}\Lambda\left(\frac{1}{p(\eta(t))}\right)\to 1\;.
\]
(in probability). In consequence, the size of the typical set and the typical probabilities can be approximated from the following generalized entropy $S_\Lambda$:
\begin{equation}
S_{\Lambda}(\eta(t))=\sum_{x(t)\in{\Omega}(t)}p(x(t))\frac{2\log\left(\frac{1}{p(x(t))}\right)}{\mathbf{W}\left[2\log\left(\frac{1}{p(x(t))}\right)\right]}\;.
\label{eq:ExtCh}
\end{equation}
where we explicitly wrote it in terms the functional form of $\Lambda=g^{-1}\circ\log$ as defined in equation (\ref{Seq:Lambdacirc}).


\begin{thebibliography}{65}
\expandafter\ifx\csname natexlab\endcsname\relax\def\natexlab#1{#1}\fi
\expandafter\ifx\csname bibnamefont\endcsname\relax
  \def\bibnamefont#1{#1}\fi
\expandafter\ifx\csname bibfnamefont\endcsname\relax
  \def\bibfnamefont#1{#1}\fi
\expandafter\ifx\csname citenamefont\endcsname\relax
  \def\citenamefont#1{#1}\fi
\expandafter\ifx\csname url\endcsname\relax
  \def\url#1{\texttt{#1}}\fi
\expandafter\ifx\csname urlprefix\endcsname\relax\def\urlprefix{URL }\fi
\providecommand{\bibinfo}[2]{#2}
\providecommand{\eprint}[2][]{\url{#2}}

\bibitem[{\citenamefont{Morowitz}(1968)}]{Morowitz:1968}
\bibinfo{author}{\bibfnamefont{H.~J.} \bibnamefont{Morowitz}},
  \emph{\bibinfo{title}{Energy flow in biology: Biological organization as a
  problem in thermal physics}} (\bibinfo{publisher}{Academic press:London},
  \bibinfo{year}{1968}).

\bibitem[{\citenamefont{Maynard~Smith and
  Szathm\'ary}(1995)}]{Maynard-Smith:1995}
\bibinfo{author}{\bibfnamefont{J.}~\bibnamefont{Maynard~Smith}}
  \bibnamefont{and}
  \bibinfo{author}{\bibfnamefont{E.}~\bibnamefont{Szathm\'ary}},
  \emph{\bibinfo{title}{The Major Transitions in Evolution}}
  (\bibinfo{publisher}{Freeman:Oxford}, \bibinfo{year}{1995}).

\bibitem[{\citenamefont{Bonner}(1988)}]{Bonner:1988}
\bibinfo{author}{\bibfnamefont{J.~T.} \bibnamefont{Bonner}},
  \emph{\bibinfo{title}{The Evolution of Complexity by Means of Natural
  Selection}} (\bibinfo{publisher}{Princeton University Press:Princeton, NJ},
  \bibinfo{year}{1988}).

\bibitem[{\citenamefont{Wolpert et~al.}(2007)\citenamefont{Wolpert, Jessell,
  Lawrence, Meyerowitz, Robertson, and Smith}}]{Wolpert:2007}
\bibinfo{author}{\bibfnamefont{L.}~\bibnamefont{Wolpert}},
  \bibinfo{author}{\bibfnamefont{T.}~\bibnamefont{Jessell}},
  \bibinfo{author}{\bibfnamefont{P.}~\bibnamefont{Lawrence}},
  \bibinfo{author}{\bibfnamefont{E.}~\bibnamefont{Meyerowitz}},
  \bibinfo{author}{\bibfnamefont{E.}~\bibnamefont{Robertson}},
  \bibnamefont{and} \bibinfo{author}{\bibfnamefont{J.}~\bibnamefont{Smith}},
  \emph{\bibinfo{title}{Principles of Development}} (\bibinfo{publisher}{Oxford
  University Press:Oxford}, \bibinfo{address}{Oxford}, \bibinfo{year}{2007}),
  \bibinfo{edition}{3rd} ed.

\bibitem[{\citenamefont{Bialek}(2012)}]{Bialek:2012}
\bibinfo{author}{\bibfnamefont{W.}~\bibnamefont{Bialek}},
  \emph{\bibinfo{title}{Biophysics: Searching for Principles}}
  (\bibinfo{publisher}{Princeton University Press:Princeton, NJ},
  \bibinfo{year}{2012}).

\bibitem[{\citenamefont{Sol\'e and Goodwin}(2000)}]{Sole:2000}
\bibinfo{author}{\bibfnamefont{R.}~\bibnamefont{Sol\'e}} \bibnamefont{and}
  \bibinfo{author}{\bibfnamefont{B.}~\bibnamefont{Goodwin}},
  \emph{\bibinfo{title}{Signs of life}} (\bibinfo{publisher}{Basic books,
  Perseus group:New York}, \bibinfo{year}{2000}).

\bibitem[{\citenamefont{Tria et~al.}(2014)\citenamefont{Tria, Loreto, Servedio,
  and Strogatz}}]{Tria:2014}
\bibinfo{author}{\bibfnamefont{F.}~\bibnamefont{Tria}},
  \bibinfo{author}{\bibfnamefont{V.}~\bibnamefont{Loreto}},
  \bibinfo{author}{\bibfnamefont{V.~D.~P.} \bibnamefont{Servedio}},
  \bibnamefont{and} \bibinfo{author}{\bibfnamefont{S.~H.}
  \bibnamefont{Strogatz}}, \textbf{\bibinfo{volume}{4}}, \bibinfo{pages}{5890}
  (\bibinfo{year}{2014}).

\bibitem[{\citenamefont{Loreto et~al.}(2016)\citenamefont{Loreto, Servedio,
  Strogatz, and Tria}}]{Loreto:2016}
\bibinfo{author}{\bibfnamefont{V.}~\bibnamefont{Loreto}},
  \bibinfo{author}{\bibfnamefont{V.~D.~P.} \bibnamefont{Servedio}},
  \bibinfo{author}{\bibfnamefont{S.~H.} \bibnamefont{Strogatz}},
  \bibnamefont{and} \bibinfo{author}{\bibfnamefont{F.}~\bibnamefont{Tria}},
  \bibinfo{journal}{Creativity and universality in language} pp.
  \bibinfo{pages}{59--83} (\bibinfo{year}{2016}).

\bibitem[{\citenamefont{Corominas-Murtra
  et~al.}(2018)\citenamefont{Corominas-Murtra, Seoane, and
  Sol\'e}}]{Corominas-Murtra:2018}
\bibinfo{author}{\bibfnamefont{B.}~\bibnamefont{Corominas-Murtra}},
  \bibinfo{author}{\bibfnamefont{L.}~\bibnamefont{Seoane}}, \bibnamefont{and}
  \bibinfo{author}{\bibfnamefont{R.}~\bibnamefont{Sol\'e}},
  \bibinfo{journal}{Journal of The Royal Society Interface}
  \textbf{\bibinfo{volume}{15}}, \bibinfo{pages}{20180395}
  (\bibinfo{year}{2018}).

\bibitem[{\citenamefont{Iacopini et~al.}(2020)\citenamefont{Iacopini, Di~Bona,
  Ubaldi, Loreto, and Latora}}]{Iacopini:2020}
\bibinfo{author}{\bibfnamefont{I.}~\bibnamefont{Iacopini}},
  \bibinfo{author}{\bibfnamefont{G.}~\bibnamefont{Di~Bona}},
  \bibinfo{author}{\bibfnamefont{E.}~\bibnamefont{Ubaldi}},
  \bibinfo{author}{\bibfnamefont{V.}~\bibnamefont{Loreto}}, \bibnamefont{and}
  \bibinfo{author}{\bibfnamefont{V.}~\bibnamefont{Latora}},
  \bibinfo{journal}{Phys. Rev. Lett.} \textbf{\bibinfo{volume}{125}},
  \bibinfo{pages}{248301} (\bibinfo{year}{2020}).

\bibitem[{\citenamefont{Schuster}(1996)}]{Schuster:1996}
\bibinfo{author}{\bibfnamefont{P.}~\bibnamefont{Schuster}},
  \bibinfo{journal}{Complexity} pp. \bibinfo{pages}{22 -- 30}
  (\bibinfo{year}{1996}).

\bibitem[{\citenamefont{Bedau et~al.}(2000)\citenamefont{Bedau, McCaskill,
  Packard, Rasmussen, Adami, Green, Ikegami, Kaneko, and Ray}}]{Bedau:2000}
\bibinfo{author}{\bibfnamefont{M.~A.} \bibnamefont{Bedau}},
  \bibinfo{author}{\bibfnamefont{J.~S.} \bibnamefont{McCaskill}},
  \bibinfo{author}{\bibfnamefont{N.~H.} \bibnamefont{Packard}},
  \bibinfo{author}{\bibfnamefont{S.}~\bibnamefont{Rasmussen}},
  \bibinfo{author}{\bibfnamefont{C.}~\bibnamefont{Adami}},
  \bibinfo{author}{\bibfnamefont{D.~G.} \bibnamefont{Green}},
  \bibinfo{author}{\bibfnamefont{T.}~\bibnamefont{Ikegami}},
  \bibinfo{author}{\bibfnamefont{K.}~\bibnamefont{Kaneko}}, \bibnamefont{and}
  \bibinfo{author}{\bibfnamefont{T.~S.} \bibnamefont{Ray}},
  \bibinfo{journal}{Artificial Life} pp. \bibinfo{pages}{363 -- 376}
  (\bibinfo{year}{2000}).

\bibitem[{\citenamefont{Ru\'iz-Mirazo et~al.}(2004)\citenamefont{Ru\'iz-Mirazo,
  Peret\'o, and Moreno}}]{Kepa:2004}
\bibinfo{author}{\bibfnamefont{K.}~\bibnamefont{Ru\'iz-Mirazo}},
  \bibinfo{author}{\bibfnamefont{J.}~\bibnamefont{Peret\'o}}, \bibnamefont{and}
  \bibinfo{author}{\bibfnamefont{A.}~\bibnamefont{Moreno}},
  \bibinfo{journal}{Origins of Life and Evolution of the Biosphere} pp.
  \bibinfo{pages}{323--346} (\bibinfo{year}{2004}).

\bibitem[{\citenamefont{Ru\'iz-Mirazo et~al.}(2008)\citenamefont{Ru\'iz-Mirazo,
  Um\'erez, and Moreno}}]{Kepa:2008}
\bibinfo{author}{\bibfnamefont{K.}~\bibnamefont{Ru\'iz-Mirazo}},
  \bibinfo{author}{\bibfnamefont{J.}~\bibnamefont{Um\'erez}}, \bibnamefont{and}
  \bibinfo{author}{\bibfnamefont{A.}~\bibnamefont{Moreno}},
  \bibinfo{journal}{Biological Philososphy} pp. \bibinfo{pages}{67 -- 85}
  (\bibinfo{year}{2008}).

\bibitem[{\citenamefont{Day}(2012)}]{Day:2012}
\bibinfo{author}{\bibfnamefont{T.}~\bibnamefont{Day}},
  \bibinfo{journal}{Journal of the Royal Society Interface} pp.
  \bibinfo{pages}{624--639} (\bibinfo{year}{2012}).

\bibitem[{\citenamefont{Packard et~al.}(2019)\citenamefont{Packard, Bedau,
  Channon, Ikegami, Rasmussen, Stanley, and Taylor}}]{Packard:2019}
\bibinfo{author}{\bibfnamefont{N.}~\bibnamefont{Packard}},
  \bibinfo{author}{\bibfnamefont{M.~A.} \bibnamefont{Bedau}},
  \bibinfo{author}{\bibfnamefont{A.}~\bibnamefont{Channon}},
  \bibinfo{author}{\bibfnamefont{T.}~\bibnamefont{Ikegami}},
  \bibinfo{author}{\bibfnamefont{S.}~\bibnamefont{Rasmussen}},
  \bibinfo{author}{\bibfnamefont{K.~O.} \bibnamefont{Stanley}},
  \bibnamefont{and} \bibinfo{author}{\bibfnamefont{T.}~\bibnamefont{Taylor}},
  \bibinfo{journal}{Artificial Life} \textbf{\bibinfo{volume}{25}},
  \bibinfo{pages}{93} (\bibinfo{year}{2019}).

\bibitem[{\citenamefont{Pattee and Sayama}(2019)}]{Pattee:2019}
\bibinfo{author}{\bibfnamefont{H.~H.} \bibnamefont{Pattee}} \bibnamefont{and}
  \bibinfo{author}{\bibfnamefont{H.}~\bibnamefont{Sayama}},
  \bibinfo{journal}{Artificial Life} \textbf{\bibinfo{volume}{25}},
  \bibinfo{pages}{4 } (\bibinfo{year}{2019}).

\bibitem[{\citenamefont{Dietrich and Hiiragi}(2007)}]{Dietrich:2007}
\bibinfo{author}{\bibfnamefont{J.-E.} \bibnamefont{Dietrich}} \bibnamefont{and}
  \bibinfo{author}{\bibfnamefont{T.}~\bibnamefont{Hiiragi}},
  \bibinfo{journal}{Development} \textbf{\bibinfo{volume}{134}},
  \bibinfo{pages}{4219} (\bibinfo{year}{2007}).

\bibitem[{\citenamefont{Maitre et~al.}(2016)\citenamefont{Maitre, Turlier,
  Illukkumbura, and {\em et al}}}]{Maitre:2016}
\bibinfo{author}{\bibfnamefont{J.~L.} \bibnamefont{Maitre}},
  \bibinfo{author}{\bibfnamefont{H.}~\bibnamefont{Turlier}},
  \bibinfo{author}{\bibfnamefont{R.}~\bibnamefont{Illukkumbura}},
  \bibnamefont{and} \bibinfo{author}{\bibnamefont{{\em et al}}},
  \bibinfo{journal}{Nature} \textbf{\bibinfo{volume}{536}},
  \bibinfo{pages}{344} (\bibinfo{year}{2016}).

\bibitem[{\citenamefont{Giammona and Camp\`as}(2021)}]{Giammona:2021}
\bibinfo{author}{\bibfnamefont{J.}~\bibnamefont{Giammona}} \bibnamefont{and}
  \bibinfo{author}{\bibfnamefont{O.}~\bibnamefont{Camp\`as}},
  \bibinfo{journal}{PLoS Comput Biol} \textbf{\bibinfo{volume}{17}},
  \bibinfo{pages}{e1007994} (\bibinfo{year}{2021}).

\bibitem[{\citenamefont{Corominas-Murtra
  et~al.}(2015)\citenamefont{Corominas-Murtra, Hanel, and
  Thurner}}]{Corominas-Murtra:2015}
\bibinfo{author}{\bibfnamefont{B.}~\bibnamefont{Corominas-Murtra}},
  \bibinfo{author}{\bibfnamefont{R.}~\bibnamefont{Hanel}}, \bibnamefont{and}
  \bibinfo{author}{\bibfnamefont{S.}~\bibnamefont{Thurner}},
  \bibinfo{journal}{Proceedings of the National Academy of Sciences}
  \textbf{\bibinfo{volume}{112}}, \bibinfo{pages}{5348} (\bibinfo{year}{2015}).

\bibitem[{\citenamefont{Corominas-Murtra
  et~al.}(2017)\citenamefont{Corominas-Murtra, Hanel, and
  Thurner}}]{Corominas-Murtra:2017}
\bibinfo{author}{\bibfnamefont{B.}~\bibnamefont{Corominas-Murtra}},
  \bibinfo{author}{\bibfnamefont{R.}~\bibnamefont{Hanel}}, \bibnamefont{and}
  \bibinfo{author}{\bibfnamefont{S.}~\bibnamefont{Thurner}},
  \bibinfo{journal}{Scientific Reports} \textbf{\bibinfo{volume}{7}},
  \bibinfo{pages}{11223} (\bibinfo{year}{2017}).

\bibitem[{\citenamefont{Fujii and Berengut}(2021)}]{Fujii:2021}
\bibinfo{author}{\bibfnamefont{K.}~\bibnamefont{Fujii}} \bibnamefont{and}
  \bibinfo{author}{\bibfnamefont{J.~C.} \bibnamefont{Berengut}},
  \bibinfo{journal}{Physical Review Letters} \textbf{\bibinfo{volume}{126}},
  \bibinfo{pages}{102502} (\bibinfo{year}{2021}).

\bibitem[{\citenamefont{Cover and Thomas}(2012)}]{Cover:2012}
\bibinfo{author}{\bibfnamefont{T.~M.} \bibnamefont{Cover}} \bibnamefont{and}
  \bibinfo{author}{\bibfnamefont{J.~A.} \bibnamefont{Thomas}},
  \emph{\bibinfo{title}{Elements of information theory}}
  (\bibinfo{publisher}{John Wiley \& Sons:New York}, \bibinfo{year}{2012}).

\bibitem[{\citenamefont{Ash}(2012)}]{Ash:2012}
\bibinfo{author}{\bibfnamefont{R.~B.} \bibnamefont{Ash}},
  \emph{\bibinfo{title}{Information Theory}} (\bibinfo{publisher}{Courier
  Corporation}, \bibinfo{year}{2012}).

\bibitem[{\citenamefont{Pathria}(2002)}]{Pathria:2002}
\bibinfo{author}{\bibfnamefont{R.~K.} \bibnamefont{Pathria}},
  \emph{\bibinfo{title}{Statistical Mechanics}} (\bibinfo{publisher}{Oxford
  University Press:Oxford}, \bibinfo{year}{2002}).

\bibitem[{\citenamefont{Pitowsky}(2012)}]{Pitowsky:2012}
\bibinfo{author}{\bibfnamefont{I.}~\bibnamefont{Pitowsky}}, in
  \emph{\bibinfo{booktitle}{Probability in Physics}}, edited by
  \bibinfo{editor}{\bibfnamefont{Y.}~\bibnamefont{Ben-Menahem}}
  \bibnamefont{and} \bibinfo{editor}{\bibfnamefont{M.}~\bibnamefont{Hemmo}}
  (\bibinfo{publisher}{Springer: Berlin}, \bibinfo{year}{2012}), pp.
  \bibinfo{pages}{41--58}.

\bibitem[{\citenamefont{Lebowitz}(1993)}]{Lebowitz:1993}
\bibinfo{author}{\bibfnamefont{J.~L.} \bibnamefont{Lebowitz}},
  \bibinfo{journal}{Physica A} \textbf{\bibinfo{volume}{194}},
  \bibinfo{pages}{1} (\bibinfo{year}{1993}).

\bibitem[{\citenamefont{Battermann}(2001)}]{Battermann:2001}
\bibinfo{author}{\bibfnamefont{R.}~\bibnamefont{Battermann}},
  \emph{\bibinfo{title}{The Devil in the Details: Asymptotic Reasoning in
  Explanation, Reduction, and Emergence}} (\bibinfo{publisher}{Oxford
  University press: Oxford, UK}, \bibinfo{year}{2001}).

\bibitem[{\citenamefont{Frigg}(2009)}]{Frigg:2009}
\bibinfo{author}{\bibfnamefont{R.}~\bibnamefont{Frigg}},
  \bibinfo{journal}{Philosophy of Science} \textbf{\bibinfo{volume}{76}},
  \bibinfo{pages}{997} (\bibinfo{year}{2009}).

\bibitem[{\citenamefont{Ka\v{c}}(1989)}]{Kac:1989}
\bibinfo{author}{\bibfnamefont{M.}~\bibnamefont{Ka\v{c}}},
  \bibinfo{journal}{Advances in Applied Mathematics}
  \textbf{\bibinfo{volume}{10}}, \bibinfo{pages}{270} (\bibinfo{year}{1989}).

\bibitem[{\citenamefont{Pitman}(2006)}]{Pitman:2006}
\bibinfo{author}{\bibfnamefont{J.}~\bibnamefont{Pitman}},
  \emph{\bibinfo{title}{Combinatorial Stochastic Processes}}
  (\bibinfo{publisher}{Springer-Verlag:Berlin}, \bibinfo{year}{2006}).

\bibitem[{\citenamefont{Clifford and Stirzaker}(2008)}]{Clifford:2008}
\bibinfo{author}{\bibfnamefont{P.}~\bibnamefont{Clifford}} \bibnamefont{and}
  \bibinfo{author}{\bibfnamefont{D.}~\bibnamefont{Stirzaker}},
  \bibinfo{journal}{Proceedings of the Royal Society of London A}
  \textbf{\bibinfo{volume}{464}}, \bibinfo{pages}{1105} (\bibinfo{year}{2008}).

\bibitem[{\citenamefont{Bir\'o and N\'eda}(2018)}]{Biro:2017}
\bibinfo{author}{\bibfnamefont{T.}~\bibnamefont{Bir\'o}} \bibnamefont{and}
  \bibinfo{author}{\bibfnamefont{Z.}~\bibnamefont{N\'eda}},
  \bibinfo{journal}{Physica A: Statistical Mechanics and its Applications}
  \textbf{\bibinfo{volume}{499}}, \bibinfo{pages}{335 } (\bibinfo{year}{2018}).

\bibitem[{\citenamefont{Jensen et~al.}(2018)\citenamefont{Jensen, Pazuki,
  Pruessner, and Tempesta}}]{Jensen:2018}
\bibinfo{author}{\bibfnamefont{H.~J.} \bibnamefont{Jensen}},
  \bibinfo{author}{\bibfnamefont{R.~H.} \bibnamefont{Pazuki}},
  \bibinfo{author}{\bibfnamefont{G.}~\bibnamefont{Pruessner}},
  \bibnamefont{and} \bibinfo{author}{\bibfnamefont{P.}~\bibnamefont{Tempesta}},
  \bibinfo{journal}{Journal of Physics A: Mathematical and Theoretical}
  \textbf{\bibinfo{volume}{51}} (\bibinfo{year}{2018}).

\bibitem[{\citenamefont{Korbel et~al.}(2021)\citenamefont{Korbel, Lindner,
  Hanel, and Thurner}}]{Korbel:2021}
\bibinfo{author}{\bibfnamefont{J.}~\bibnamefont{Korbel}},
  \bibinfo{author}{\bibfnamefont{S.~D.} \bibnamefont{Lindner}},
  \bibinfo{author}{\bibfnamefont{R.}~\bibnamefont{Hanel}}, \bibnamefont{and}
  \bibinfo{author}{\bibfnamefont{S.}~\bibnamefont{Thurner}},
  \bibinfo{journal}{Nature Communications} \textbf{\bibinfo{volume}{12}},
  \bibinfo{pages}{1127} (\bibinfo{year}{2021}).

\bibitem[{\citenamefont{Gray and Davisson}(1974)}]{Gray:1974}
\bibinfo{author}{\bibfnamefont{R.~M.} \bibnamefont{Gray}} \bibnamefont{and}
  \bibinfo{author}{\bibfnamefont{L.~D.} \bibnamefont{Davisson}},
  \bibinfo{journal}{IEEE Trans. Inform. Theory} \textbf{\bibinfo{volume}{20}},
  \bibinfo{pages}{502} (\bibinfo{year}{1974}).

\bibitem[{\citenamefont{Visweswariah and Verdu}(2000)}]{Visweswariah:2000}
\bibinfo{author}{\bibfnamefont{S.~R.} \bibnamefont{Visweswariah},
  \bibfnamefont{K.~Kulkarni}} \bibnamefont{and}
  \bibinfo{author}{\bibfnamefont{S.}~\bibnamefont{Verdu}},
  \bibinfo{journal}{IEEE Transactions on Information Theory}
  \textbf{\bibinfo{volume}{46}}, \bibinfo{pages}{1633} (\bibinfo{year}{2000}).

\bibitem[{\citenamefont{Vu and Kass}(2009)}]{Vu:2009}
\bibinfo{author}{\bibfnamefont{B.}~\bibnamefont{Vu}, \bibfnamefont{V.~Q.~Yu}}
  \bibnamefont{and} \bibinfo{author}{\bibfnamefont{R.~E.} \bibnamefont{Kass}},
  \bibinfo{journal}{Neural Computation} \textbf{\bibinfo{volume}{21}},
  \bibinfo{pages}{688} (\bibinfo{year}{2009}).

\bibitem[{\citenamefont{Boashash and O'Toole}(2013)}]{Boashash:2013}
\bibinfo{author}{\bibfnamefont{G.}~\bibnamefont{Boashash},
  \bibfnamefont{B.~Azemi}} \bibnamefont{and}
  \bibinfo{author}{\bibfnamefont{J.}~\bibnamefont{O'Toole}},
  \bibinfo{journal}{IEEE Signal Processing Magazine}
  \textbf{\bibinfo{volume}{30}}, \bibinfo{pages}{108} (\bibinfo{year}{2013}).

\bibitem[{\citenamefont{Granero-Belinch\'on and Garnier}(2019)}]{Granero:2019}
\bibinfo{author}{\bibfnamefont{S.~G.} \bibnamefont{Granero-Belinch\'on},
  \bibfnamefont{C.~Roux}} \bibnamefont{and}
  \bibinfo{author}{\bibfnamefont{N.~B.} \bibnamefont{Garnier}},
  \bibinfo{journal}{Entropy} \textbf{\bibinfo{volume}{21}},
  \bibinfo{pages}{1223} (\bibinfo{year}{2019}).

\bibitem[{\citenamefont{Abe}(2000)}]{Abe:2000}
\bibinfo{author}{\bibfnamefont{S.}~\bibnamefont{Abe}}, \bibinfo{journal}{Phys.
  Lett. A} \textbf{\bibinfo{volume}{271}}, \bibinfo{pages}{74}
  (\bibinfo{year}{2000}).

\bibitem[{\citenamefont{Hanel and Thurner}(2011{\natexlab{a}})}]{Hanel:2011a}
\bibinfo{author}{\bibfnamefont{R.}~\bibnamefont{Hanel}} \bibnamefont{and}
  \bibinfo{author}{\bibfnamefont{S.}~\bibnamefont{Thurner}},
  \bibinfo{journal}{EPL (Europhysics Letters)} \textbf{\bibinfo{volume}{93}},
  \bibinfo{pages}{20006} (\bibinfo{year}{2011}{\natexlab{a}}).

\bibitem[{\citenamefont{Enciso and Tempesta}(2017)}]{Enciso:2017}
\bibinfo{author}{\bibfnamefont{A.}~\bibnamefont{Enciso}} \bibnamefont{and}
  \bibinfo{author}{\bibfnamefont{P.}~\bibnamefont{Tempesta}},
  \bibinfo{journal}{Journal of Statistical Mechanics: Theory and Experiment}
  \textbf{\bibinfo{volume}{12}}, \bibinfo{pages}{123101}
  (\bibinfo{year}{2017}).

\bibitem[{\citenamefont{Tempesta}(2016{\natexlab{a}})}]{Tempesta:2016}
\bibinfo{author}{\bibfnamefont{P.}~\bibnamefont{Tempesta}},
  \bibinfo{journal}{Annals of Physics} \textbf{\bibinfo{volume}{365}},
  \bibinfo{pages}{180} (\bibinfo{year}{2016}{\natexlab{a}}).

\bibitem[{\citenamefont{Tempesta}(2016{\natexlab{b}})}]{Tempesta:2016a}
\bibinfo{author}{\bibfnamefont{P.}~\bibnamefont{Tempesta}},
  \bibinfo{journal}{Proceedings of the Royal Society of London A}
  \textbf{\bibinfo{volume}{472}}, \bibinfo{pages}{20160143}
  (\bibinfo{year}{2016}{\natexlab{b}}).

\bibitem[{\citenamefont{Thurner et~al.}(2017)\citenamefont{Thurner,
  Corominas-Murtra, and Hanel}}]{Thurner:2017}
\bibinfo{author}{\bibfnamefont{S.}~\bibnamefont{Thurner}},
  \bibinfo{author}{\bibfnamefont{B.}~\bibnamefont{Corominas-Murtra}},
  \bibnamefont{and} \bibinfo{author}{\bibfnamefont{R.}~\bibnamefont{Hanel}},
  \bibinfo{journal}{Physical Review E} \textbf{\bibinfo{volume}{96}},
  \bibinfo{pages}{032124} (\bibinfo{year}{2017}).

\bibitem[{\citenamefont{Jizba and Korbel}(2020)}]{Jizba:2020}
\bibinfo{author}{\bibfnamefont{P.}~\bibnamefont{Jizba}} \bibnamefont{and}
  \bibinfo{author}{\bibfnamefont{J.}~\bibnamefont{Korbel}},
  \bibinfo{journal}{Physical Review E} \textbf{\bibinfo{volume}{101}},
  \bibinfo{pages}{042126} (\bibinfo{year}{2020}).

\bibitem[{\citenamefont{Korbel and Jizba}(2019)}]{Korbel:2019}
\bibinfo{author}{\bibfnamefont{J.}~\bibnamefont{Korbel}} \bibnamefont{and}
  \bibinfo{author}{\bibfnamefont{P.}~\bibnamefont{Jizba}},
  \bibinfo{journal}{Physical Review Letters} \textbf{\bibinfo{volume}{122}},
  \bibinfo{pages}{120601} (\bibinfo{year}{2019}).

\bibitem[{\citenamefont{Jizba and Korbel}(2017)}]{Jizba:2017}
\bibinfo{author}{\bibfnamefont{P.}~\bibnamefont{Jizba}} \bibnamefont{and}
  \bibinfo{author}{\bibfnamefont{J.}~\bibnamefont{Korbel}},
  \bibinfo{journal}{Entropy} \textbf{\bibinfo{volume}{19}},
  \bibinfo{pages}{605} (\bibinfo{year}{2017}).

\bibitem[{\citenamefont{Hanel and Thurner}(2011{\natexlab{b}})}]{Hanel:2011}
\bibinfo{author}{\bibfnamefont{R.}~\bibnamefont{Hanel}} \bibnamefont{and}
  \bibinfo{author}{\bibfnamefont{S.}~\bibnamefont{Thurner}},
  \bibinfo{journal}{Europhysics Letters} \textbf{\bibinfo{volume}{96}},
  \bibinfo{pages}{50003} (\bibinfo{year}{2011}{\natexlab{b}}).

\bibitem[{\citenamefont{Jensen and Tempesta}(2018)}]{Jensen:2018a}
\bibinfo{author}{\bibfnamefont{H.~J.} \bibnamefont{Jensen}} \bibnamefont{and}
  \bibinfo{author}{\bibfnamefont{P.}~\bibnamefont{Tempesta}},
  \bibinfo{journal}{Entropy} \textbf{\bibinfo{volume}{20}},
  \bibinfo{pages}{804} (\bibinfo{year}{2018}).

\bibitem[{\citenamefont{Korbel and Thurner}(2018)}]{Korbel:2018}
\bibinfo{author}{\bibfnamefont{R.}~\bibnamefont{Korbel},
  \bibfnamefont{J.~Hanel}} \bibnamefont{and}
  \bibinfo{author}{\bibfnamefont{S.}~\bibnamefont{Thurner}},
  \bibinfo{journal}{New Journal of Physics} \textbf{\bibinfo{volume}{20}},
  \bibinfo{pages}{093007} (\bibinfo{year}{2018}).

\bibitem[{\citenamefont{J.~Korbel and Thurner}(2020)}]{Korbel:2020}
\bibinfo{author}{\bibfnamefont{R.}~\bibnamefont{J.~Korbel},
  \bibfnamefont{Hanel}} \bibnamefont{and}
  \bibinfo{author}{\bibfnamefont{S.}~\bibnamefont{Thurner}},
  \bibinfo{journal}{European Physics Journal, Special Topics}
  \textbf{\bibinfo{volume}{229}}, \bibinfo{pages}{787} (\bibinfo{year}{2020}).

\bibitem[{\citenamefont{Hanel and Thurner}(2013)}]{Hanel:2013}
\bibinfo{author}{\bibfnamefont{R.}~\bibnamefont{Hanel}} \bibnamefont{and}
  \bibinfo{author}{\bibfnamefont{S.}~\bibnamefont{Thurner}},
  \bibinfo{journal}{Entropy} \textbf{\bibinfo{volume}{15}},
  \bibinfo{pages}{5324} (\bibinfo{year}{2013}).

\bibitem[{\citenamefont{Nicholson et~al.}(2016)\citenamefont{Nicholson,
  Alaghemandi, and Green}}]{Nicholson:2016}
\bibinfo{author}{\bibfnamefont{S.~B.} \bibnamefont{Nicholson}},
  \bibinfo{author}{\bibfnamefont{M.}~\bibnamefont{Alaghemandi}},
  \bibnamefont{and} \bibinfo{author}{\bibfnamefont{J.~R.} \bibnamefont{Green}},
  \bibinfo{journal}{The Journal of Chemical Physics}
  \textbf{\bibinfo{volume}{145}}, \bibinfo{pages}{084112}
  (\bibinfo{year}{2016}).

\bibitem[{\citenamefont{Balogh et~al.}(2020)\citenamefont{Balogh, Palla,
  Pollner, and Cz\'egel}}]{Balogh:2020}
\bibinfo{author}{\bibfnamefont{S.~G.} \bibnamefont{Balogh}},
  \bibinfo{author}{\bibfnamefont{G.}~\bibnamefont{Palla}},
  \bibinfo{author}{\bibfnamefont{P.}~\bibnamefont{Pollner}}, \bibnamefont{and}
  \bibinfo{author}{\bibfnamefont{D.}~\bibnamefont{Cz\'egel}},
  \bibinfo{journal}{Scientific Reports} \textbf{\bibinfo{volume}{10}},
  \bibinfo{pages}{15516} (\bibinfo{year}{2020}).

\bibitem[{\citenamefont{Gardiner}(1983)}]{Gardiner:1983}
\bibinfo{author}{\bibfnamefont{C.~W.} \bibnamefont{Gardiner}},
  \emph{\bibinfo{title}{Handbook of Stochastic Methods for Physics, Chemistry
  and the Natural Sciences}} (\bibinfo{publisher}{Springer-Verlag: Berlin,
  Germany}, \bibinfo{year}{1983}).

\bibitem[{\citenamefont{Feller}(1991)}]{Feller:1991}
\bibinfo{author}{\bibfnamefont{W.}~\bibnamefont{Feller}},
  \emph{\bibinfo{title}{An Introduction to Probability Theory and Its
  Applications, Vol. 1,2}} (\bibinfo{publisher}{Wiley:New York, NY, USA},
  \bibinfo{year}{1991}).

\bibitem[{\citenamefont{Shannon}(1948)}]{Shannon:1948}
\bibinfo{author}{\bibfnamefont{C.~E.} \bibnamefont{Shannon}},
  \bibinfo{journal}{Bell Sys. Tech. J.} \textbf{\bibinfo{volume}{27}},
  \bibinfo{pages}{379} (\bibinfo{year}{1948}).

\bibitem[{\citenamefont{Khinchin}(1957)}]{Khinchin:1957}
\bibinfo{author}{\bibfnamefont{A.}~\bibnamefont{Khinchin}},
  \emph{\bibinfo{title}{Mathematical Foundations of Information Theory}}
  (\bibinfo{publisher}{Dover:New York}, \bibinfo{year}{1957}).

\bibitem[{\citenamefont{Bassetti et~al.}(2009)\citenamefont{Bassetti, Zarei,
  Cosentino~Lagomarsino, and Bianconi}}]{Bassetti:2009}
\bibinfo{author}{\bibfnamefont{B.}~\bibnamefont{Bassetti}},
  \bibinfo{author}{\bibfnamefont{M.}~\bibnamefont{Zarei}},
  \bibinfo{author}{\bibfnamefont{M.}~\bibnamefont{Cosentino~Lagomarsino}},
  \bibnamefont{and} \bibinfo{author}{\bibfnamefont{G.}~\bibnamefont{Bianconi}},
  \bibinfo{journal}{Physical Review E} \textbf{\bibinfo{volume}{80}},
  \bibinfo{pages}{066118} (\bibinfo{year}{2009}).

\bibitem[{\citenamefont{Abramowitz and Stegun}(1964)}]{Abramowitz:1964}
\bibinfo{author}{\bibfnamefont{M.}~\bibnamefont{Abramowitz}} \bibnamefont{and}
  \bibinfo{author}{\bibfnamefont{I.}~\bibnamefont{Stegun}},
  \emph{\bibinfo{title}{Handbook of mathematical functions. National Bureau of
  Standards}} (\bibinfo{publisher}{Applied Mathematics Series 55, U.S.
  Government Printing Office:Washington DC}, \bibinfo{year}{1964}).

\bibitem[{\citenamefont{S.~Abe and Plastino}(2001)}]{Abe:2001}
\bibinfo{author}{\bibfnamefont{F.~P.} \bibnamefont{S.~Abe},
  \bibfnamefont{S.~Mart\'inez}} \bibnamefont{and}
  \bibinfo{author}{\bibfnamefont{A.}~\bibnamefont{Plastino}},
  \bibinfo{journal}{Physics Letters A} \textbf{\bibinfo{volume}{281}},
  \bibinfo{pages}{126} (\bibinfo{year}{2001}).

\bibitem[{\citenamefont{Abe}(2006)}]{Abe:2006}
\bibinfo{author}{\bibfnamefont{S.}~\bibnamefont{Abe}},
  \bibinfo{journal}{Physica A: Statistical Mechanics and its Applications}
  \textbf{\bibinfo{volume}{368}}, \bibinfo{pages}{430} (\bibinfo{year}{2006}).

\end{thebibliography}
\end{document}